\DeclareMathOperator{\supp}{supp}
\newcommand{\RR}{{\mathbb R}}
\newcommand{\mal}{\circ}
\newcommand{\dd}[2]{\frac{\text{d} #1}{\text{d} #2}}
\newcommand{\xx}{\text{ex}}
\newcommand{\tot}{\text{tot}}
\newcommand{\FC}{C}
\newcommand{\ka}{\kappa}
\newcommand{\rr}{r}
\newcommand{\ro}{{r_\ast}}
\newtheorem{thm}{Theorem}
\newtheorem{pro}[thm]{Proposition}
\newtheorem{lem}[thm]{Lemma}
\newtheorem{cor}[thm]{Corollary}
\newcommand\col[1]{#1}
\begin{document}


\thispagestyle{empty}

\begin{center}
\Large
Enzyme allocation problems in kinetic metabolic networks:
Optimal solutions are elementary flux modes

\vspace{2ex}
\large
Stefan M\"uller$^{1,2}$, Georg Regensburger$^1$, Ralf Steuer$^3$

\vspace{2ex}
\normalsize
$^1$%
Johann Radon Institute for Computational and Applied Mathematics,\\
Austrian Academy of Sciences\\
Altenbergerstra{\ss}e 69, 4040 Linz, Austria

$^{2}$%
CzechGlobe,\\
Academy of Sciences of the Czech Republic\\
B\v{e}lidla 986/4a, 603 00 Brno, Czech Republic

$^{3}$%
Institute for Theoretical Biology,\\
Humboldt University Berlin\\
Invalidenstra{\ss}e 43, 10115 Berlin, Germany

\vspace{2ex}
\tt
stefan.mueller@ricam.oeaw.ac.at
\rm
\end{center}

\begin{abstract}
\noindent
The survival and proliferation of cells and organisms
require a highly coordinated allocation of cellular resources 
to ensure the efficient synthesis of cellular components.
In particular, the total enzymatic capacity for cellular metabolism
is limited by \col{finite resources that are shared between all enzymes,}
such as cytosolic space, energy expenditure \col{for amino-acid synthesis, or micro-nutrients}.
While extensive work has been done
to study constrained optimization problems based \col{only} on stoichiometric information,
mathematical results that characterize the optimal flux in kinetic metabolic networks are still scarce. 
Here, we study constrained enzyme allocation problems with general kinetics,
using the theory of oriented matroids.
We give a rigorous proof for the fact
that \col{optimal} solutions of the non-linear optimization problem are elementary flux modes. 
This finding has significant consequences for our understanding of 
\col{optimality in metabolic networks} as well as for \col{the identification of}
metabolic switches and the computation of optimal flux \col{distributions} in kinetic metabolic networks.
\end{abstract}

\vspace{4ex}
\noindent
{\bf Keywords.}
metabolic optimization, enzyme kinetics,
oriented matroid, elementary vector, conformal sum


\section{Introduction}

Living organisms are under constant evolutionary pressure to survive and reproduce in complex environments. 
As a direct consequence, cellular pathways are often assumed to be highly adapted to their respective tasks,
given the biochemical and biophysical constraints of their environment. 
Optimality principles \col{have proven to be} powerful methods
to study and understand the large-scale organization of metabolic pathways~\cite{Berkhout2012,Heinrich1996,Schuetz2012,Steuer2008,Molenaar2009}.
A variety of recent computational techniques, such as flux-balance analysis (FBA), 
seek to identify metabolic flux distributions
that maximize given objective functions, such as ATP regeneration or biomass yield,
under a set of linear constraints. 
As one of their prime merits, FBA and related stoichiometric methods,
\col{including the generalization to time-dependent metabolism}~\cite{Mahadevan2002,Antoniewicz2013},
only require knowledge of the stoichiometry of a metabolic network
-- data that are available for an increasing number of organisms
in the form of large-scale metabolic reconstructions~\cite{Oberhardt2009,Orth2010}. 

However, despite their explanatory and predictive success,
constraint-based stoichiometric methods also have inherent limits.
Specifically, FBA and related methods typically maximize stoichiometric yield.
That is, the value of a designated output flux is maximized, given a set of limiting input fluxes.
As emphasized in a number of recent studies,
the assumption of \col{maximal stoichiometric yield} is
not necessarily a universal principle of metabolic network function~\cite{Schuster2008,Schuster2011,Goel2012,Molenaar2009}.
Quite on the contrary, examples of seemingly suboptimal metabolic behavior,
at least from a stoichiometric perspective, 
are well-known for many decades.
Among the most prominent instances are the Warburg and the Crabtree effect~\cite{Warburg1924,Crabtree1928,Hsu2008}. 
Under certain circumstances,
cells utilize a fermentative metabolism rather than aerobic respiration to regenerate ATP,
despite the presence of oxygen and despite its significantly lower stoichiometric yield of ATP per amount of glucose consumed. 

To account for such seemingly suboptimal behavior,
several modifications and extensions of FBA have been developed recently.
Conventional FBA is augmented with additional principles
concerning limited cytosolic volume~\cite{Beg2007,Vazquez2008,Vazquez2010,Vazquez2011,Shlomi2011},
membrane occupancy~\cite{Zhuang2011},
and other, more general capacity constraints~\cite{Schuster2011,Goelzer2011a}.
Each of these extensions allows for new insights into suboptimal stoichiometric behavior,
and additional constraints often also induce the utilization of pathways with lower stoichiometric yield.
However, none of the modifications of FBA addresses a metabolic network as a genuine dynamical system
with particular kinetics \col{that depend} on a number of parameters.
The neglect of the dynamical nature
is a direct consequence of the extensive data requirements for parametrizing \col{enzymatic reaction rates}.
Correspondingly, and despite its importance to understand metabolic optimality,
only few mathematically rigorous results are currently available
that allow to characterize solutions of constrained non-linear optimization problems
arising from kinetic metabolic networks.

In this work, we formulate and study constrained enzyme allocation problems
in metabolic networks with general kinetics.
In particular, we are interested in enzyme distributions that maximize a designated output flux,
given a limited total enzymatic capacity.
We show that the optimal distributions of metabolic fluxes
differ from solutions obtained by FBA and related stoichiometric methods.
Most importantly, we give a rigorous proof for the fact
that optimal flux distributions are elementary flux modes. 
Therein, we make use of results from the theory of oriented matroids
that were hitherto only scarcely applied to metabolic networks~\cite{BeardBabsonCurtisQian2004},
but offer great potential to unify and advance metabolic network analysis,
as mentioned in~\cite{GagneurKlamt2004,MuellerBockmayr2013}.
Our finding has significant consequences for the understanding of metabolic \col{optimality}
as well as for the \col{efficient} computation of optimal fluxes in kinetic metabolic networks.

The paper is organized as follows.
\col{In Section~\ref{sec:def},}
we introduce kinetic metabolic networks
and state the \col{enzyme allocation problem} of interest. 
\col{In Section~\ref{sec:ex},}
we illustrate our mathematical results and the ideas \col{underlying our} proofs
by a conceptual example of a minimal metabolic network.
\col{In Section~\ref{sec:math},}
we address the connection between metabolic network analysis and the theory of oriented matroids.
\col{In particular, we reformulate the optimization problem
and show that, if the enzyme allocation problem has an optimal solution,}
then it has an optimal solution
which is an elementary flux mode.
Finally, we provide a discussion of our results
in the context of metabolic optimization problems. 

\section{Problem statement} \label{sec:def}

After introducing the necessary mathematical notation,
we define kinetic meta\-bolic networks and elementary flux modes,
and state the metabolic optimization problem that we investigate in the following.

\subsubsection*{Mathematical notation}

We denote the positive real numbers by $\RR_>$ and the non-negative real numbers by $\RR_\ge$.
For a finite index set $I$, we write $\RR^I$ for the real vector space of \col{vectors $x=(x_i)_{i \in I}$} with $x_i \in \RR$,
and $\RR^I_>$ and $\RR^I_\ge$ for the corresponding subsets.
Given $x \in \RR^I$, we write $x > 0$ if $x \in \RR^I_>$ and $x \ge 0$ if $x \in \RR^I_\ge$.
We denote the support of a vector $x \in \RR^I$ by $\supp(x) = \{ i \in I \mid x_i \neq 0 \}$.
For $x,y \in \RR^I$, we denote the component-wise (or Hadamard) product by $x \circ y \in \RR^I$,
that is, $(x \circ y)_i = x_i y_i$.

\subsubsection*{Kinetic metabolic networks}

A {\em metabolic network} $(\mathcal{S},\mathcal{R},N)$
consists of a set $\mathcal{S}$ of internal metabolites, a set $\mathcal{R}$ of reactions,
and the stoichiometric matrix $N \in \RR^{\mathcal{S} \times \mathcal{R}}$,
which contains the net stoichiometric coefficients for each metabolite $s \in \mathcal{S}$ in each reaction $\rr \in \mathcal{R}$.
The set of reactions is the disjoint union of the sets of reversible and irreversible reactions,
$\mathcal{R}_\leftrightarrow$ and $\mathcal{R}_\rightarrow$, respectively.

In the following, we assume that each reaction can be catalyzed by an enzyme.
Let $x \in \RR^\mathcal{S}_\ge$ denote the vector of metabolite concentrations,
$c \in \RR^\mathcal{R}_\ge$ the vector of enzyme concentrations,
and $p \in \RR^\mathcal{P}$ a vector of parameters
such as turnover numbers, equilibrium constants, and Michaelis-Menten constants.
We write the vector of rate functions
$v \colon \RR_\ge^\mathcal{S} \times \RR_\ge^\mathcal{R} \times \RR^\mathcal{P} \to \RR^\mathcal{R}$ as
\[
v(x;c,p) = c \circ \ka(x,p)
\]
with a function
$\ka \colon \RR^\mathcal{S}_\ge \times \RR^\mathcal{P} \to \RR^\mathcal{R}$.
In other words, 
each reaction rate $v_\rr$ is the product of the corresponding enzyme concentration $c_\rr$
with a particular kinetics $\ka_\rr$.

A {\em kinetic metabolic network} $(\mathcal{S},\mathcal{R},N,v)$ is a metabolic network $(\mathcal{S},\mathcal{R},N)$
together with rate functions $v$ as defined above.
The dynamics of $(\mathcal{S},\mathcal{R},N,v)$ is determined by the ODEs
\[
\dd{x}{t} = N v(x;c,p) .
\]
A steady state $\bar{x} \in \RR^\mathcal{S} $
and the corresponding steady-state flux $\bar{v}=v(\bar{x};c,p) \in \RR^\mathcal{R}$
are determined by
\[
0 = N \bar{v} .
\]

\subsubsection*{Elementary flux modes}

A {\em flux mode} is a non-zero steady-state flux $f \in \RR^\mathcal{R}$
with non-negative components for all irreversible reactions.
In other words,
a flux mode is a non-zero element of the {\em flux cone}
\[
\FC = \{ f \in \RR^\mathcal{R} \mid N f = 0 \text{ and } f_\rr \ge 0 \text{ for all } \rr \in \mathcal{R}_\rightarrow \} .
\]

An {\em elementary flux mode} (EFM) $e \in \RR^\mathcal{R}$
is a flux mode with minimal support:
\begin{equation} \label{efm}
f \in \FC \text{ with } f \neq 0 \text{ and } \supp(f) \subseteq \supp(e) \Rightarrow \supp(f) = \supp(e). \tag{efm}
\end{equation}
In fact, $\supp(f) = \supp(e)$ further implies $f = \lambda \, e$ with $\lambda>0$.
Otherwise, one can construct another flux mode from $e$ and $f$ with smaller support.
As a consequence, there can be only finitely many EFMs (up to multiplication with a positive scalar).
For references on elementary flux modes and related computational issues, see
\cite{SchusterHilgetag1994,SchusterHilgetagWoodsFell2002,KlamtStelling2003,GagneurKlamt2004,WagnerUrbanczik2005,LarhlimiBockmayr2009}.

\subsubsection*{Enzyme allocation problem}

We are now in a position to state the optimization problem
that we study in this work.

{\em
\col{Let} $(\mathcal{S},\mathcal{R},N,v)$ be a kinetic metabolic network \col{with flux cone $\FC$.}
\col{Fix} a reaction $\ro \in \mathcal{R}$,
positive weights $w \in \RR^\mathcal{R}_>$,
\col{a subset $X \subseteq \RR^\mathcal{S}_\ge$,
and parameters $p \in \RR^\mathcal{P}$.}
Maximize the component $\bar{v}_\ro > 0$ of the steady-state flux $\bar{v} = c \mal \ka(\bar{x},p)$
by varying \col{the steady-state metabolite concentrations $\bar{x} \in X$}
and the enzyme concentrations $c \in \RR^\mathcal{R}_\ge$.
\col{Thereby, fix} the weighted sum of enzyme concentrations
\col{and require the steady-state flux to be a flux mode}:
\begin{subequations} \label{original}
\begin{equation}
\max_{\bar{x} \in X, \, c \in \RR^\mathcal{R}_\ge} \bar{v}_\ro
\end{equation}
subject to
\begin{gather}
\sum_{\rr \in \mathcal{R}} w_\rr \, c_\rr = c^\tot , \\
\col{\bar{v} \in \FC, \, \bar{v}_\ro > 0 . }
\end{gather} 
\end{subequations}
}

\col{Note that the constraint $\bar{v} \in \FC$ implies}
$0 = N \bar{v} = N \bar{v}(c \circ \ka(\bar{x},p))$
which is non-linear, in general.

\col{
Further, note that the enzyme allocation problem may be unfeasible,
in particular, the constraint~(\ref{original}c) may be unsatisfiable.
In this case, the flux cone and the kinetics are ``incompatible''.
Moreover, even if the problem is feasible, the maximum may not be attained at finite metabolite concentrations.
On the other hand,
if the problem is feasible, the kinetics is continuous in $\bar{x}$,
and $X$ is compact (bounded and closed),
then the maximum is attained.
}%

Problem~\eqref{original} defines a very general metabolic optimization problem,
in which the set of enzyme concentrations is adjusted 
in order to maximize a specific metabolic flux within the steady-state flux vector. 
In this respect, the weighted sum (\ref{original}b) may encode different enzymatic constraints,
such as limited cellular or membrane surface space,
limited nitrogen or transition metal availability,
as well as other constraints for the abundance of certain enzymes. 
In each case, the weight factors denote the fraction of the resource used per unit enzyme. 
Likewise, the flux component $\bar{v}_\ro$ may stand for diverse metabolic processes,
ranging from the synthesis rate of a particular product within a specific pathway
to the rate of overall cellular growth.
The optimization problem seeks to identify the maximum value of $\bar{v}_\ro$,
the associated enzyme and steady-state metabolite concentrations, $c$ and $\bar{x}$,
as well as the corresponding flux $\bar{v}$.

\section{A conceptual example} \label{sec:ex}

\begin{figure}
\begin{center}
\includegraphics[width=\textwidth]{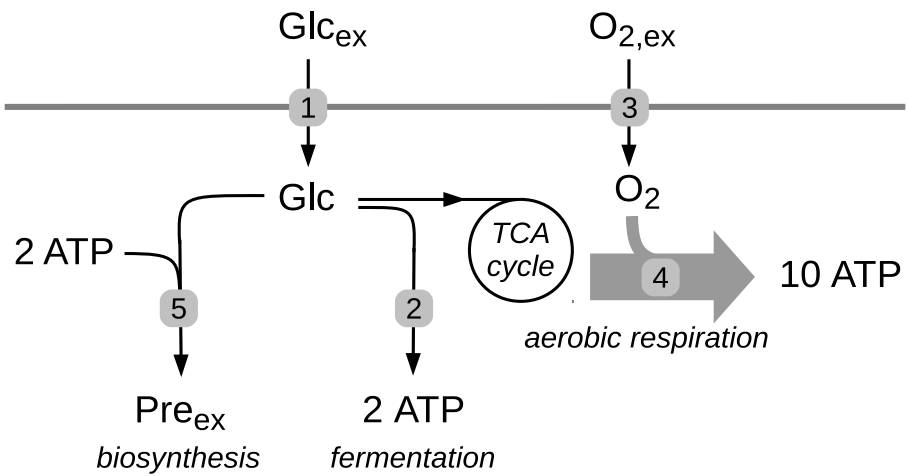}
\end{center}
\caption{
A minimal metabolic network.
The pathway consists of two uptake reactions (1,3), two intracellular conversions (2,4),
and the formation of a precursor molecule (5).
The resulting EFMs are $e^{1} = \left( 2,1,0,0,1 \right)$
and $e^{2} = \left( \frac{6}{5},0,\frac{1}{5},\frac{1}{5},1 \right)$
with overall reactions \mbox{2 \ce{Glc_\xx} $\to$ 1 \ce{Pre_\xx}} (fermentation)
and \mbox{6 \ce{Glc_\xx} + 1 \ce{O_{2,\xx}} $\to$ 5 \ce{Pre_\xx}} (respiration).
\label{fig:network}}
\end{figure}

In order to illustrate our mathematical results and the ideas in its proofs,
we first study a minimal metabolic network for the production of a precursor molecule from glucose.
In particular, we consider two alternative pathways:
fermentation (low yield) and respiration (high yield), cf.\ Figure~\ref{fig:network}.
The actual metabolic network $(\mathcal{S},\mathcal{R},N)$ is further simplified
and involves the internal metabolites $\mathcal{S} = \{ \ce{Glc}, \ce{O_2}, \ce{ATP} \}$,
the set of reactions $\mathcal{R}$ consisting of
\begin{alignat*}{3}
& 1  \, &:\quad& \ce{Glc_\xx} && \rightleftharpoons \ce{Glc} \\
& 2  &:\quad& \ce{Glc} && \rightleftharpoons \ce{2 ATP} \\
& 3  &:\quad& \ce{O_{2,\xx}} && \rightleftharpoons \ce{O_2} \\
& 4  &:\quad& \ce{Glc} + \ce{O_2} && \rightleftharpoons \ce{10 ATP} \\
& 5  &:\quad& \ce{Glc} + \ce{2 ATP} && \rightarrow \ce{Pre_\xx} ,
\end{alignat*}
and the resulting stoichiometric matrix
\[
N=
\bordermatrix{
 & 1 & 2 & 3 & 4 & 5 \cr
\ce{Glc} & +1 & -1 &  0 & -1 & -1 \cr
\ce{O_2} &  0 &  0 & +1 & -1 &  0 \cr
\ce{ATP} &  0 & +2 &  0 &+10 & -2 \cr
} .
\]

The external substrates/products \ce{Glc_\xx}, \ce{O_{2,\xx}}, \ce{Pre_\xx}
do not appear in $(\mathcal{S},\mathcal{R},N)$,
but their constant concentrations can enter the rate functions as parameters.
In a kinetic metabolic network $(\mathcal{S},\mathcal{R},N,v)$,
the rate of reaction $\rr$ is given as 
\[ 
v_\rr = c_\rr \, \ka_\rr , 
\]
that is, as a product of the corresponding enzyme concentration $c_\rr$ and a particular kinetics $\ka_\rr$.
In this example, we use the kinetics
\begin{align*}
\ka_1 &= k_1 \left( [\ce{Glc_\xx}] - \col{K_1} [\ce{Glc}] \right) \\
\ka_2 &= k_2 \left( [\ce{Glc}] - \col{K_2} [\ce{ATP}] \right) \\
\ka_3 &= k_3 \left( [\ce{O_{2,\xx}}] - \col{K_3} [\ce{O_2}] \right) \\
\ka_4 &= k_4 \left( [\ce{Glc}] [\ce{O_2}] - \col{K_4} [\ce{ATP}] \right) \\
\ka_5 &= k_5 \, [\ce{Glc}] [\ce{ATP}] ,
\end{align*}
however, our \col{mathematical results} do not depend on the kinetics.
In vector notation, we write
\[
v(x;c,p) = c \circ \ka(x,p) ,
\]
thereby introducing the concentrations $x$ of the internal metabolites and the parameters $p$:
\begin{align*}
x &= ([\ce{Glc}],[\ce{O_2}],[\ce{ATP}])^T , \\
p &= ([\ce{Glc_\xx}],[\ce{O_{2,\xx}}];k_1,k_2,k_3,k_4,k_5;K_1,K_2,K_3,K_4)^T .
\end{align*}
The dynamics of the network is governed by the ODEs
\[
\dd{x}{t} = N v(x;c,p) .
\]
A steady state $\bar{x}$ and the corresponding steady-state flux $\bar{v} = v(\bar{x};c,p)$ are determined by
\[
0 = N \bar{v} .
\]

The goal is to maximize the production rate of the precursor \ce{Pre_\xx},
that is, the component $\bar{v}_5$ of the steady-state flux \col{$\bar{v} = c \mal \ka(\bar{x},p)$},
by varying \col{the steady-state metabolite concentrations $\bar{x} \in \RR^3_\ge$}
and the enzyme concentrations $c \in \RR^5_\ge$.
\col{Thereby}, the (weighted) sum of enzyme concentrations \col{is fixed}
\col{and the steady-state flux must be a flux mode}:
\begin{subequations} \label{example}
\begin{equation}
\max_{\bar{x},c} \, \bar{v}_5
\end{equation}
subject to
\begin{gather}
\sum_{\rr=1}^5 c_\rr = c^\tot , \\
\col{N \bar{v} = 0 , \, \bar{v}_5 > 0 . }
\end{gather}
\end{subequations}
For convenience, we use equal weights in the sum constraint.

To simplify the problem, we consider a restriction on the steady-state metabolite concentrations $\bar{x}$,
in particular, we require $\ka(\bar{x},p) > 0$.
In chemical terms, we assume the thermodynamic feasibility
of a situation where all reactions can proceed from left to right.
Since $\bar{v} = c \circ \ka$, this implies $\bar{v} \ge 0$.
That is, every component of the steady-state flux is non-negative;
\col{in fact}, it is zero if and only if the corresponding enzyme concentration is zero.

Every \col{feasible} steady-state flux $\bar{v}$ is a flux mode.
In particular,
\col{$\bar{v} = \bar{v}_5 \, f$ where $f$ is a flux mode with $f \ge 0$ and $f_5 = 1$.}
From $\bar{v} = c \circ \ka = \bar{v}_5 \, f$,
we further obtain
\[
c_\rr = \bar{v}_5 \, \frac{f_\rr}{\ka_\rr} .
\]
Now, we can rewrite the constraint on the enzyme concentrations as
\[
c^\tot = \sum_{r=1}^5 c_\rr = \bar{v}_5 \sum_{r=1}^5 \frac{f_\rr}{\ka_\rr} .
\]
Instead of maximizing $\bar{v}_5$,
we can minimize $\frac{c^\tot}{\bar{v}_5} = \sum_{r=1}^5 \frac{f_\rr}{\ka_\rr(\bar{x},p)}$
by varying the steady-state metabolite concentrations $\bar{x}$
and the flux mode $f$.
\col{Hence}, the enzyme allocation problem~\col{\eqref{example} with the restriction $\ka(\bar{x},p) > 0$}
is equivalent to:
\begin{subequations} \label{examplerestricted}
\begin{equation}
\min_{\bar{x},f} \, \sum_{r=1}^5 \frac{f_\rr}{\ka_\rr(\bar{x},p)}
\end{equation}
subject to
\begin{gather}
\ka(\bar{x},p) > 0 , \, \col{f \ge 0 ,} \\
\col{N f = 0}, \, f_5 = 1 .
\end{gather}
\end{subequations}

\col{As shown in Subsection~\ref{subsec:sign},}
every flux mode $f \ge 0$ is a non-negative linear combination of elementary flux modes (EFMs) $e \ge 0$.
In fact, there are two such EFMs,
\begin{align*}
e^{1} &= \left( 2,1,0,0,1 \right)^T \\
e^{2} &= \left( {\textstyle \frac{6}{5},0,\frac{1}{5},\frac{1}{5},1 } \right)^T ,
\end{align*}
representing fermentation and respiration,
and hence
\[
f = \alpha_1 \, e^1 + \alpha_2 \, e^2 \quad \text{with } \alpha_1, \alpha_2 \ge 0 .
\]
Note that we have scaled the EFMs $e$ such that $e_5=1$.
The condition $f_5 = 1$ implies $\alpha_1 + \alpha_2 = 1$.
As a result, we obtain another equivalent formulation of the restricted enzyme allocation problem:
\begin{subequations}
\begin{equation}
\min_{\bar{x},\alpha_1,\alpha_2} \, \sum_{r=1}^5 \frac{\alpha_1 \, e^1_\rr + \alpha_2 \, e^2_\rr}{\ka_\rr(\bar{x},p)}
\end{equation}
subject to
\begin{gather}
\ka(\bar{x},p) > 0 , \\
\alpha_1 + \alpha_2 = 1 .
\end{gather}
\end{subequations}

We observe that the objective function is linear in $\alpha_1$ and $\alpha_2$:
\[
\alpha_1 \underbrace{\sum_{r=1}^5 \frac{e^1_\rr}{\ka_\rr(\bar{x},p)}}_{g_1}
+ \, \alpha_2 \underbrace{\sum_{r=1}^5 \frac{e^2_\rr}{\ka_\rr(\bar{x},p)}}_{g_2}
\]
with
\begin{align*}
g_1(\bar{x},p) &= \frac{2}{\ka_1(\bar{x},p)} + \frac{1}{\ka_2(\bar{x},p)} + \frac{1}{\ka_5(\bar{x},p)} \\
g_2(\bar{x},p) &= \frac{\textstyle \frac{6}{5}}{\ka_1(\bar{x},p)} + \frac{\textstyle \frac{1}{5}}{\ka_3(\bar{x},p)}
+ \frac{\textstyle \frac{1}{5}}{\ka_4(\bar{x},p)} + \frac{1}{\ka_5(\bar{x},p)}.
\end{align*}
Clearly, $g_1(\bar{x},p) > 0$ and $g_2(\bar{x},p) > 0$, since $\ka(\bar{x},p) > 0$.
\col{Assume that the minima of $g_1$ and $g_2$ are attained at $\bar{x}^1$ and $\bar{x}^2$, respectively.
That is,}
$\hat{g}_1 = \min_{\bar{x}} g_1(\bar{x},p)$ $\col{= g_1(\bar{x}^1,p)}$
and $\hat{g}_2 = \min_{\bar{x}} g_2(\bar{x},p)$ $\col{= g_2(\bar{x}^2,p)}$.
If $\hat{g}_1 < \hat{g}_2$, then the objective function
\col{attains its minimum $\hat{g}_1$ at $\bar{x}=\bar{x}^1$,} $\alpha_1 = 1$ and $\alpha_2 = 0$,
that is, for $f = e^1$.
\col{
To see this, assume $\alpha_2>0$;
then, for all $\bar{x}$,
\[
\alpha_1 g_1(\bar{x},p) + \alpha_2 g_2(\bar{x},p)
\ge \alpha_1 \hat{g}_1 + \alpha_2 \hat{g}_2
> \alpha_1 \hat{g}_1 + \alpha_2 \hat{g}_1
= (\alpha_1 + \alpha_2) \, \hat{g}_1
= \hat{g}_1 .
\]
}%
Conversely, if $\hat{g}_1 > \hat{g}_2$, the minimum is attained for $f = e^2$,
and finally, if $\hat{g}_1 = \hat{g}_2$, both $f=e^1$ and $f=e^2$ are optimal.
In the degenerate case where $\hat{g}_1 = g_1(\bar{x}^0,p) = g_2(\bar{x}^0,p) = \hat{g}_2$ at the same minimum point~$\bar{x}^0$,
any $f = \alpha_1 \, e^1 + \alpha_2 \, e^2$ (with $\alpha_1,\alpha_2\ge0 $ and $\alpha_1 + \alpha_2 = 1$) is optimal.

We can summarize our result as follows:
generically,
the steady-state flux $\bar{v}$ related to an optimal solution of the restricted enzyme allocation problem~\eqref{examplerestricted}
is an EFM.
The same holds for all appropriate restrictions and hence for the full enzyme allocation problem~\eqref{example}.

\begin{figure}
\begin{center}
\includegraphics[width=0.75\textwidth]{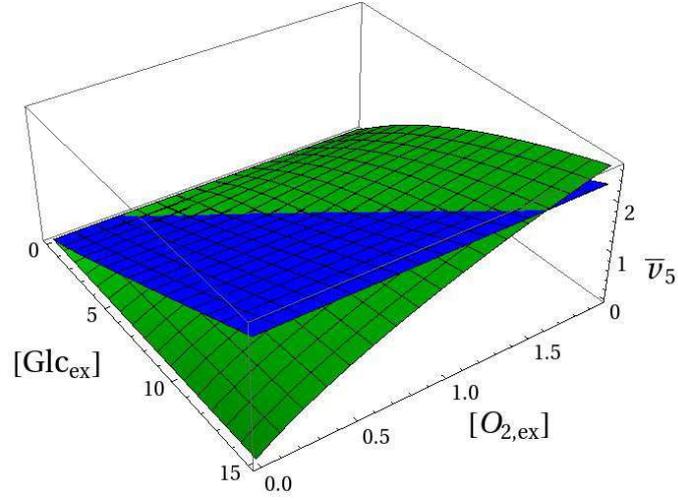}
\end{center}
\caption{
Maximum production rate $\bar{v}_5$ 
as a function of external substrate concentrations $[\ce{Glc_\xx}]$ and $[\ce{O_{2,\xx}}]$
plotted for EFM $e^1$ (fermentation, blue) and EFM $e^2$ (respiration, green).
Parameters: $k_1=k_2=k_3=k_4=k_5=1$, $K_1=K_3=1$, $K_2=K_4=0.1$.
\label{fig:1}
}
\end{figure}

\begin{figure}
\begin{center}
\includegraphics[width=0.3\textwidth]{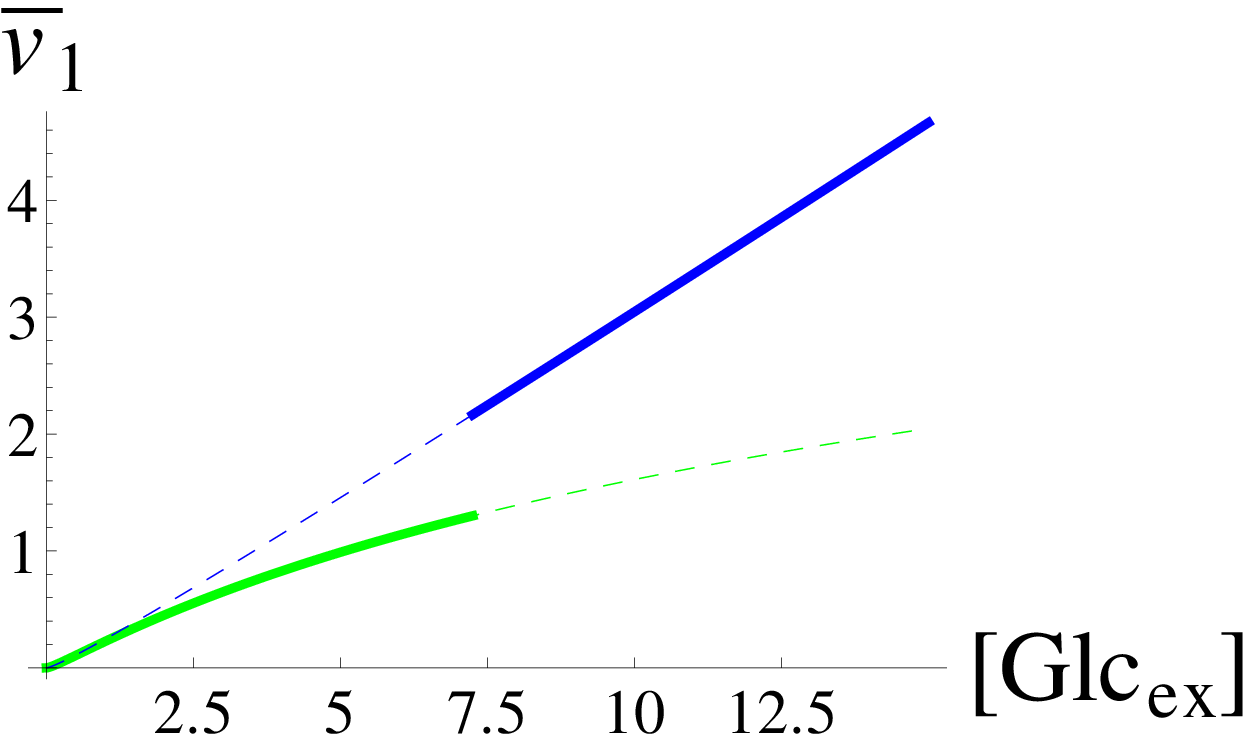} \hspace{0.25cm}
\includegraphics[width=0.3\textwidth]{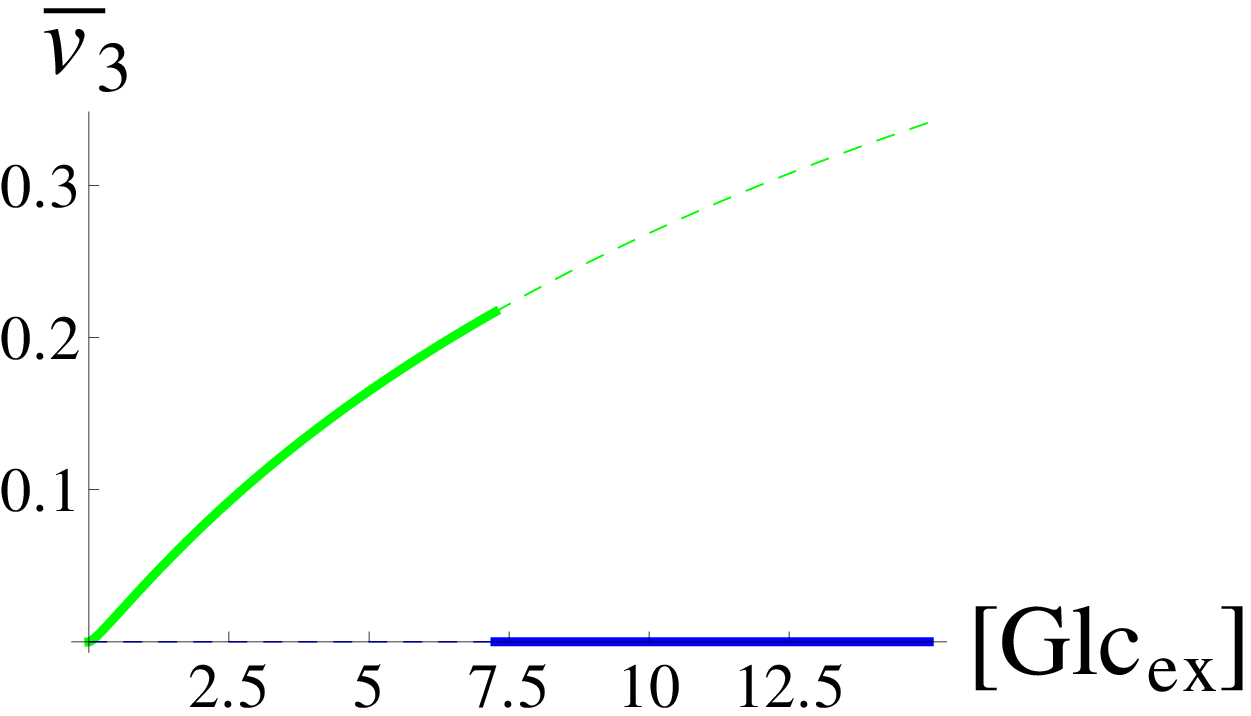} \hspace{0.25cm}
\includegraphics[width=0.3\textwidth]{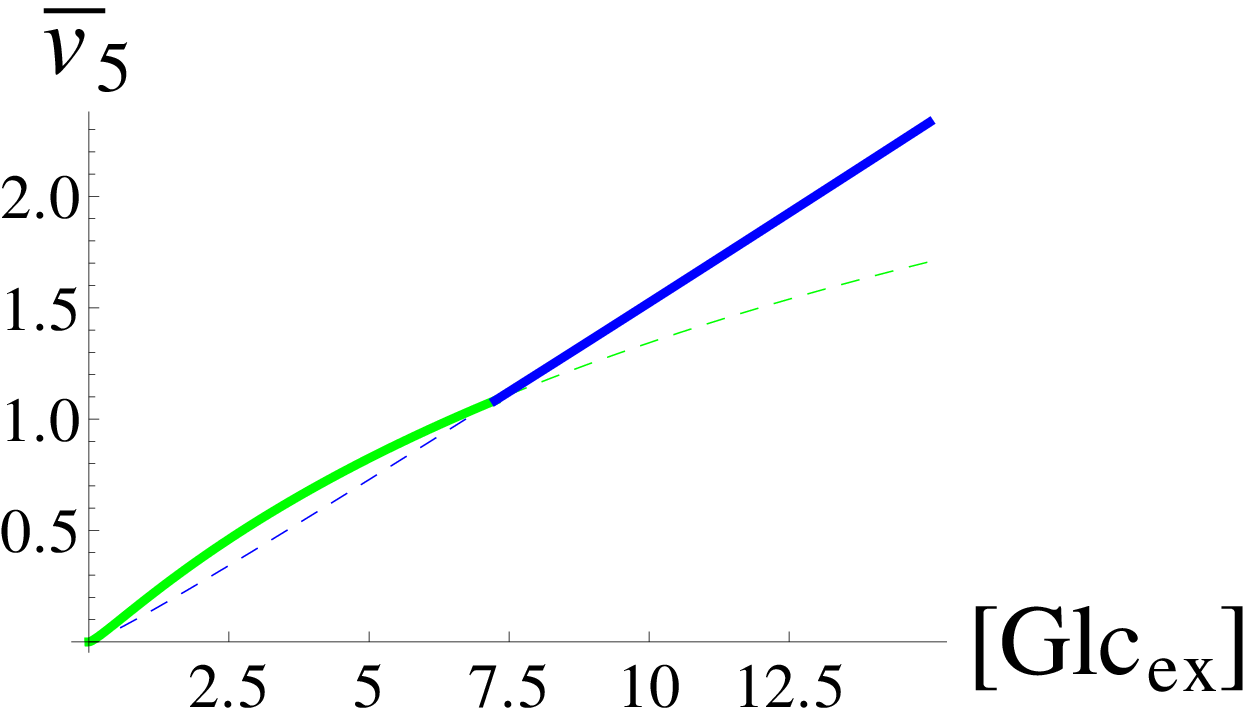} \vspace{0.5cm}

\includegraphics[width=0.3\textwidth]{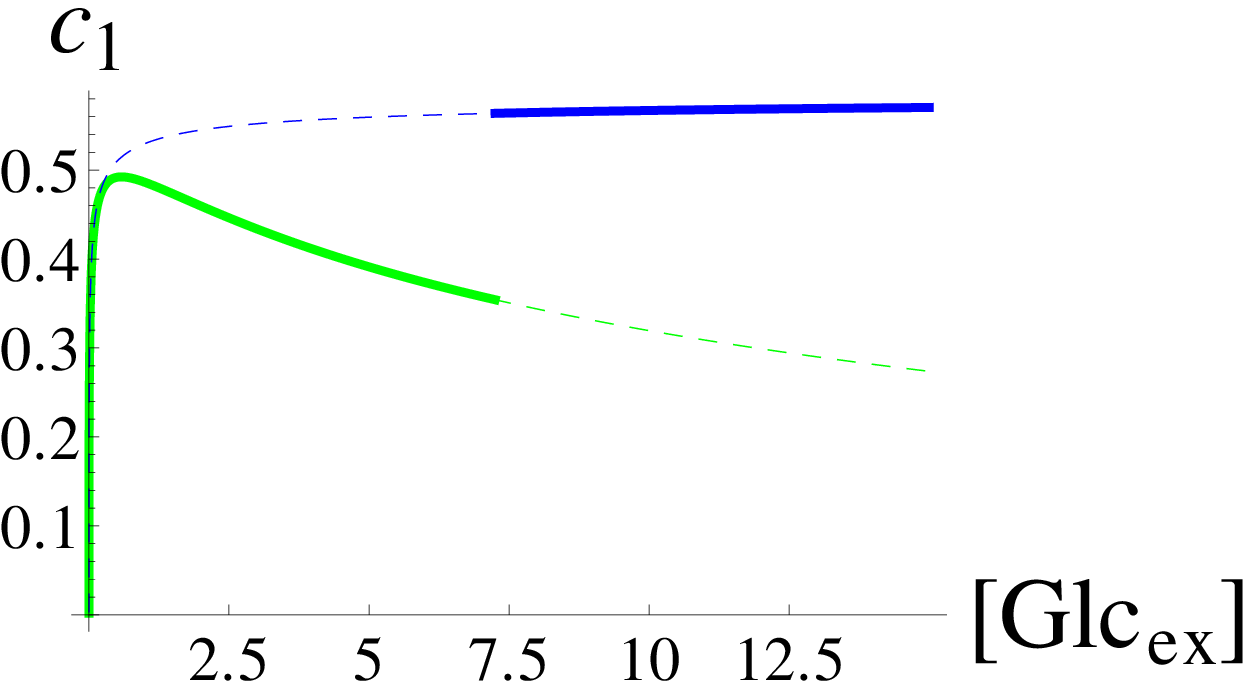} \hspace{0.25cm}
\includegraphics[width=0.3\textwidth]{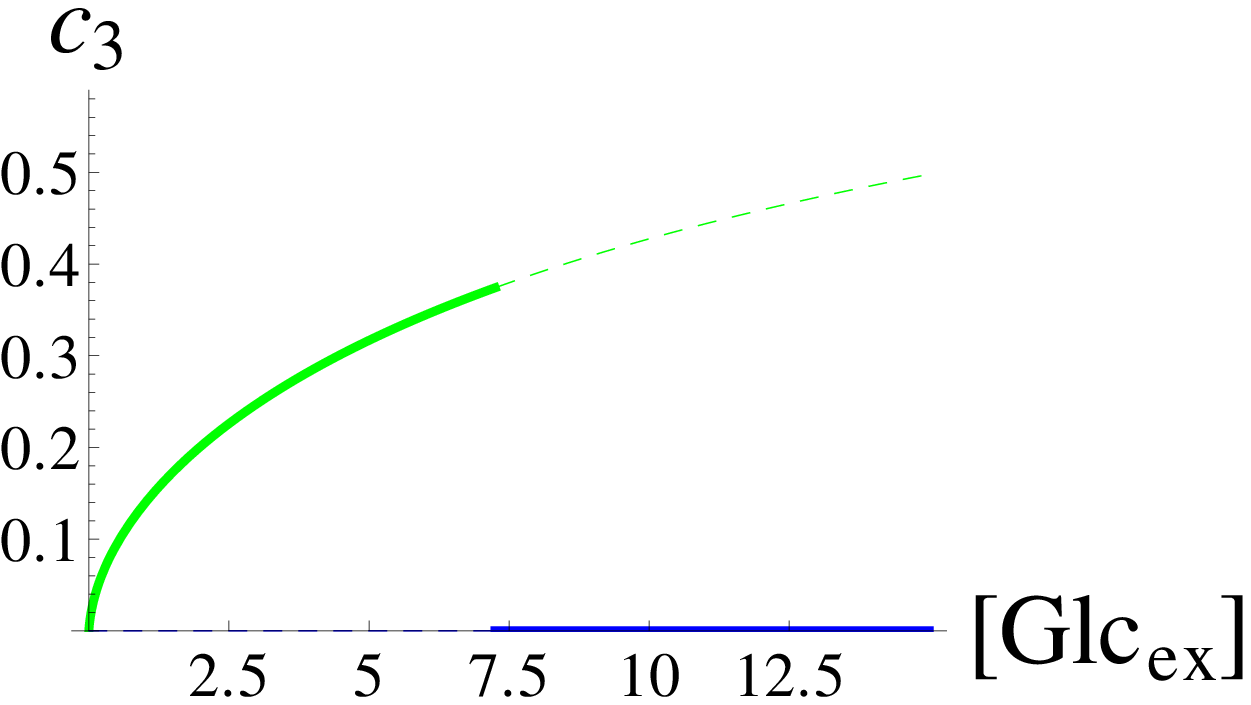} \hspace{0.25cm}
\includegraphics[width=0.3\textwidth]{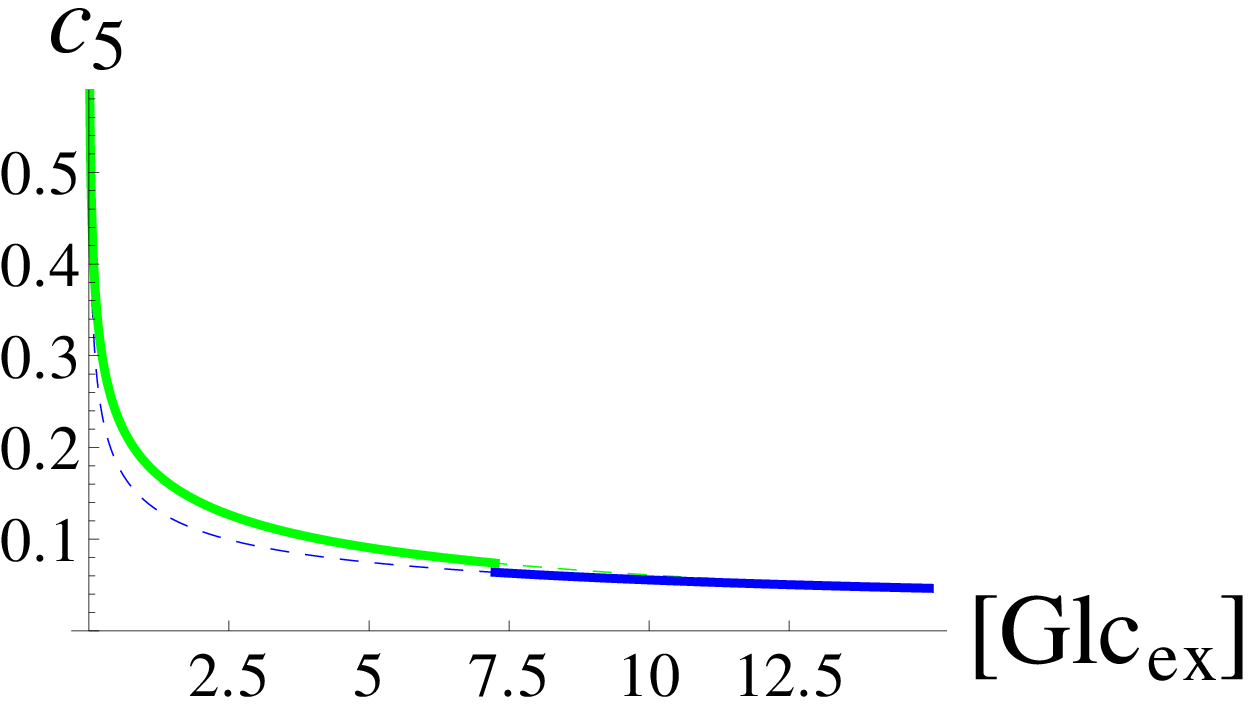} \vspace{0.5cm}

\includegraphics[width=0.3\textwidth]{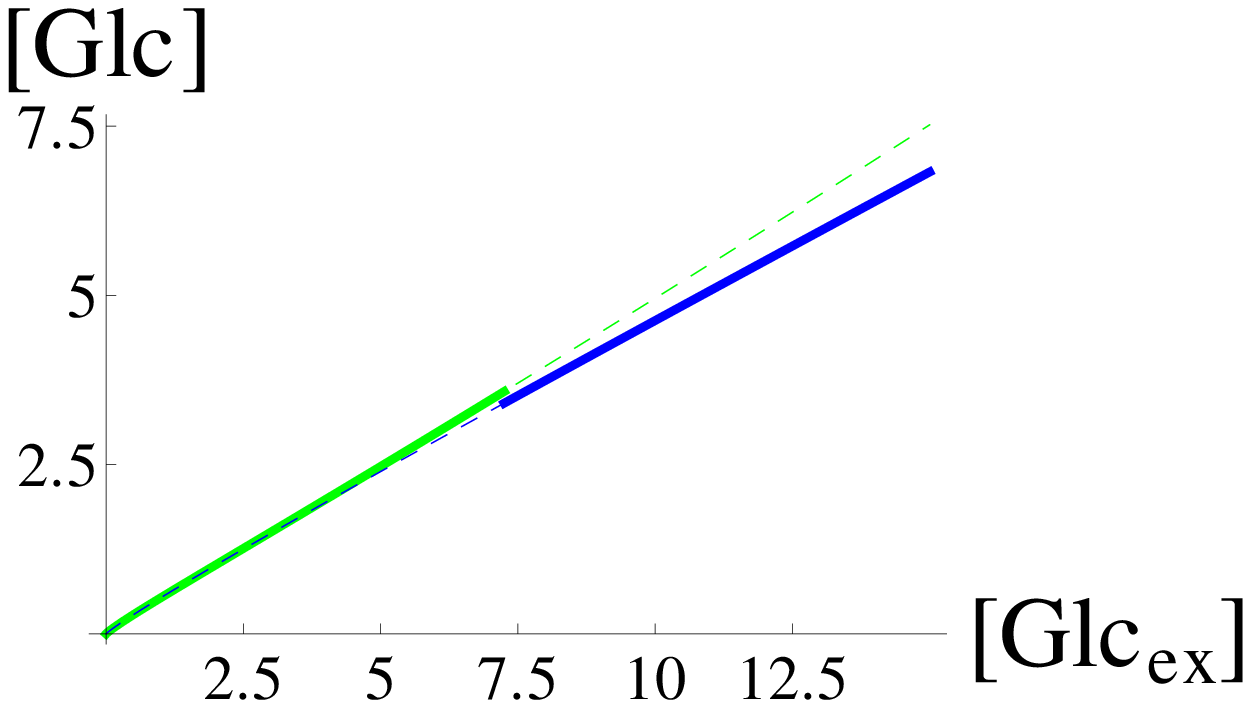}\hspace{0.25cm}
\includegraphics[width=0.3\textwidth]{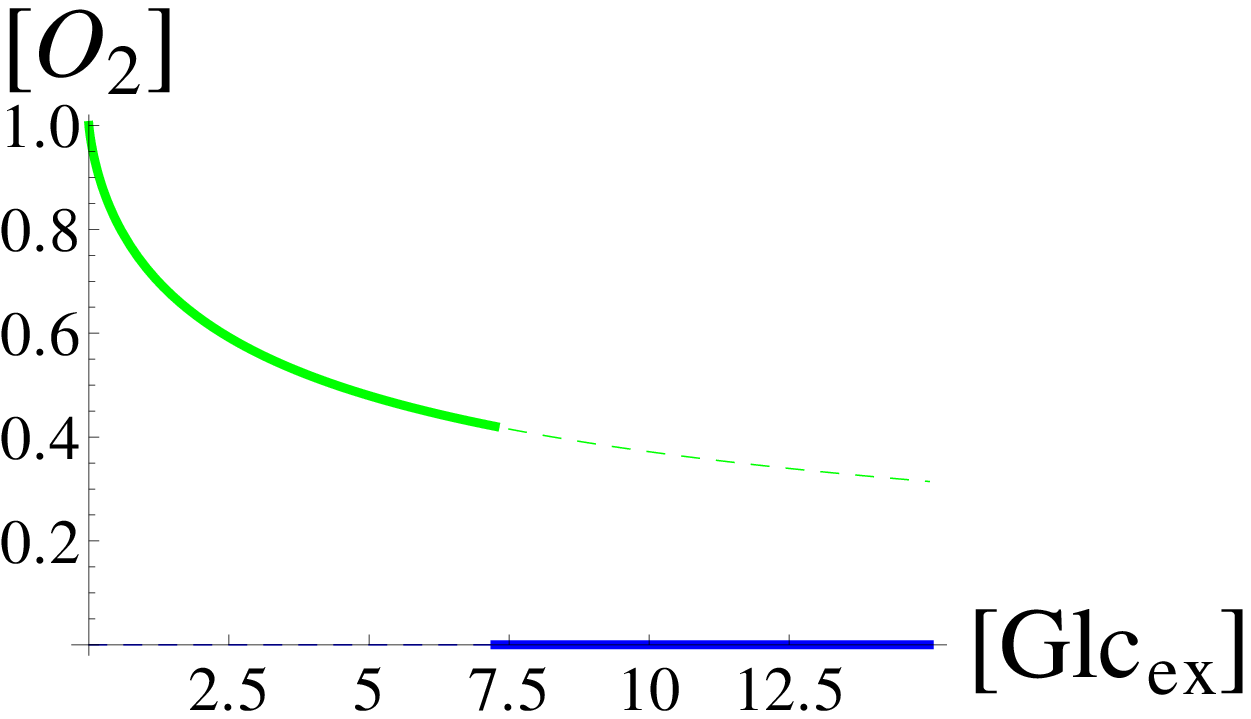} \hspace{0.25cm}
\includegraphics[width=0.3\textwidth]{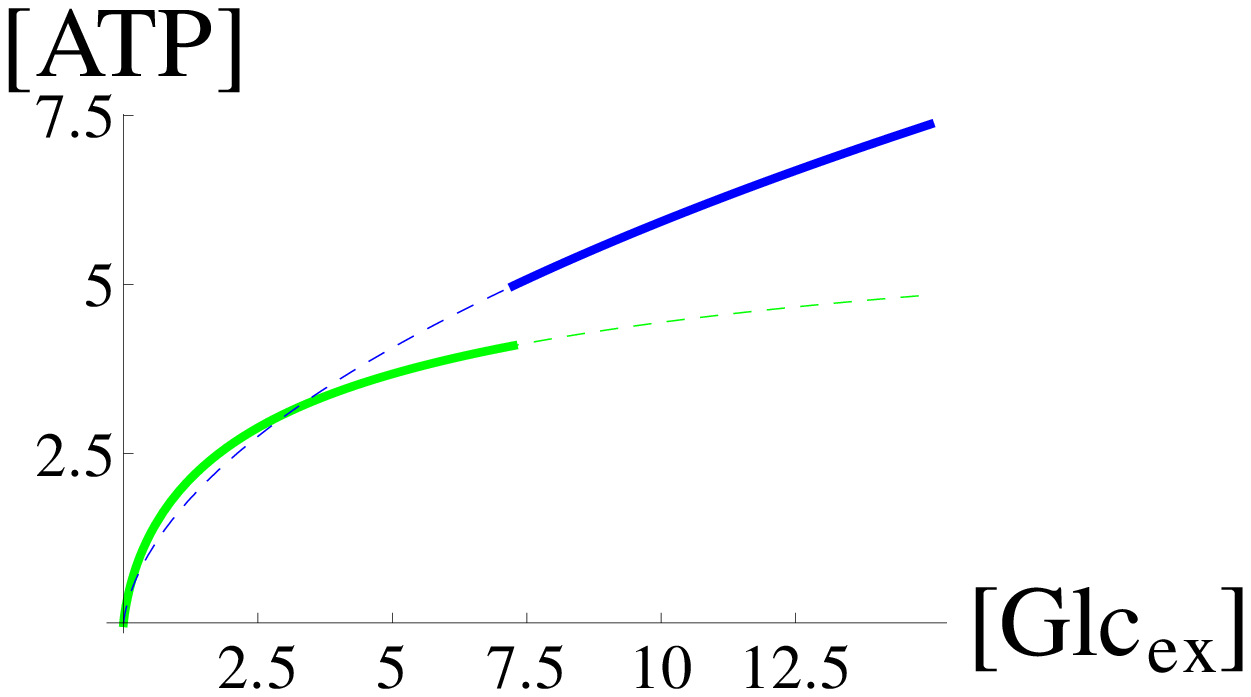}
\end{center}
\caption{
Optimal fluxes $\bar{v}_1,\bar{v}_3,\bar{v}_5$, enzyme concentrations $c_1,c_3,c_5$,
and internal metabolite concentrations $[\ce{Glc}],[\ce{O2}],[\ce{ATP}]$
as functions of external substrate concentration $[\ce{Glc_\xx}]$
plotted for EFM $e^1$ (fermentation, blue) and EFM $e^2$ (respiration, green).
With increasing $[\ce{Glc_\xx}]$,
the optimal solution switches from respiration (thick green lines) to fermentation (thick blue lines).
Parameters: $[\ce{O_{2,\xx}}]=1$, $k_1=k_2=k_3=k_4=k_5=1$, $K_1=K_3=1$, $K_2=K_4=0.1$.
\label{fig:2}
}
\end{figure}

For variable external substrate concentrations $[\ce{Glc_\xx}]$ and $[\ce{O_{2,\xx}}]$,
we are interested in which EFM is optimal and when a switch between EFMs occurs.
To this end, we determine the optimal solution for each EFM.
The optimization problem restricted to EFM $e^1$ is equivalent to
\[
\min_{\bar{x}} \, g_1(\bar{x},p)
\]
subject to
\[
\ka_1(\bar{x},p)>0, \, \ka_2(\bar{x},p)>0, \, \ka_5(\bar{x},p)>0 .
\]
\col{In EFM $e^1$, reactions 3 and 4 do not carry any flux,} that is, $e^1_3=e^1_4=0$.
\col{Hence, the corresponding} enzyme concentrations are zero,
that is, $c_3=c_4=0$,
and there are no constraints involving $\ka_3(\bar{x},p)$ and $\ka_4(\bar{x},p)$.
From the optimal metabolite concentrations $\bar{x}$,
we determine the optimal enzyme concentrations $c_1$, $c_2$, and $c_5$ as
\[
c_\rr = \bar{v}_5 \, \frac{e^1_\rr}{\ka_\rr}
= c^\tot \, \frac{\frac{e^1_\rr}{\ka_\rr(\bar{x},p)}}{\sum_{r=1,2,5} \frac{e^1_\rr}{\ka_\rr(\bar{x},p)}} .
\]

Explicitly, the optimization problem for EFM $e^1$ amounts to
\[
\min_{[\ce{Glc}],[\ce{ATP}]} \left(
\frac{2}{k_1 \left( [\ce{Glc_\xx}] - K_1 [\ce{Glc}] \right)}
+ \frac{1}{k_2 \left( [\ce{Glc}] - K_2 [\ce{ATP}] \right)}
+ \frac{1}{k_5 \, [\ce{Glc}] [\ce{ATP}]}
\right)
\]
subject to
\[
[\ce{Glc_\xx}] - K_1 [\ce{Glc}] > 0 , \,
[\ce{Glc}] -  K_2 [\ce{ATP}] > 0 , \,
[\ce{Glc}] [\ce{ATP}] > 0 ,
\]
where we omit the bar over the steady-state metabolite concentrations.
\col{From the optimal metabolite concentrations $[\ce{Glc}],[\ce{ATP}]$,}
we determine the optimal enzyme concentrations $c_1$, $c_2$, and $c_5$.
\col{For example,}
\[
c_1 = c^\tot \,
\frac{ \frac{2}{k_1 \left( [\ce{Glc_\xx}] - K_1 [\ce{Glc}] \right)} }
{ \frac{2}{k_1 \left( [\ce{Glc_\xx}] - K_1 [\ce{Glc}] \right)}
+ \frac{1}{k_2 \left( [\ce{Glc}] - K_2 [\ce{ATP}] \right)}
+ \frac{1}{k_5 \, [\ce{Glc}] [\ce{ATP}]} } .
\]
\col{The optimization problem restricted to EFM $e^2$ is treated analogously.}

Finally,
we vary $[\ce{Glc_\xx}]$ and $[\ce{O_{2,\xx}}]$,
solve the restricted optimization problems for EFMs $e^1$ and $e^2$,
and compare the resulting maximum values of $\bar{v}_5$,
cf.\ Figure~\ref{fig:1}.
Clearly, the optimal solution of the enzyme allocation problem switches between EFMs $e^1$ and $e^2$
which involves a discontinuous change of enzyme and metabolite concentrations,
cf.\ Figure~\ref{fig:2}, where we fix $[\ce{O_{2,\xx}}]=1$ and vary $[\ce{Glc_\xx}]$.

\section{Mathematical results} \label{sec:math}

We reformulate the enzyme allocation problem~\eqref{original}
and characterize its solutions.
To this end, we employ concepts from the theory of oriented matroids
like elementary vectors, sign vectors, and conformal sums.

Realizable oriented matroids arise from vector subspaces.
Essentially, a realizable oriented matroid is the set of sign vectors of a subspace
or, equivalently, all sign vectors with minimal support.
Abstract oriented matroids can be characterized by axiom systems for \mbox{(co-)vectors} (satisfied by the sign vectors of a subspace),
\mbox{(co-)circuits} (satisfied by the sign vectors with minimal support), or, equivalently, chirotopes.
For an introduction to oriented matroids, we refer to the survey~\cite{Richter-GebertZiegler1997},
the textbooks~\cite{BachemKern1992} and~\cite[Chapters~6 and 7]{Ziegler1995},
and the encyclopedic treatment~\cite{BjornerLasSturmfelsWhiteZiegler1999}.

In applications to metabolic network analysis,
the involved oriented matroids are realizable.
For example, the sign vector of a thermodynamically feasible steady-state flux
must be orthogonal to all (internal) circuits~\cite{BeardBabsonCurtisQian2004}.
In the original proof, the circuit axioms for oriented matroids are used explicitly;
however, the result also follows from basic facts about the orthogonality of sign vectors of subspaces~\cite[Chapter~6]{Ziegler1995};
alternatively, it can be proved using linear programming duality~\cite{Mueller2012,Noor2012}.
We note that
oriented matroids also appear in the study of directed hypergraph and Petri net models of biochemical reactions~\cite{Oliveiraetal2001}
and in the theory of chemical reaction networks with generalized mass action kinetics~\cite{MuellerRegensburger2012}.

\subsection{Elementary vectors}

An {\em elementary vector} (EV) $e \in \RR^\mathcal{R}$ of a \col{vector} subspace $S \subseteq \RR^\mathcal{R}$
is a non-zero vector with minimal support~\cite{Rockafellar1969}:
\begin{equation} \label{ev}
f \in S \text{ with } f \neq 0 \text{ and } \supp(f) \subseteq \supp(e) \Rightarrow \supp(f) = \supp(e) . \tag{ev}
\end{equation}

It is easy to see that EFMs are exactly those EVs of $\ker(N)$
that are flux modes.
To our knowledge, this fact has not been clarified before.
\begin{lem} \label{lem:efmev}
Let $(\mathcal{S},\mathcal{R},N)$ be a metabolic network and $e \in \RR^\mathcal{R}$.
The following statements are equivalent:
\begin{itemize}
\item[(i)]
$e$ is an EFM.
\item[(ii)]
$e$ is an EV of $\,\ker(N)$ and a flux mode.
\end{itemize}
\end{lem}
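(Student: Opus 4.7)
The plan is to prove the two implications separately.

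Direction (ii) $\Rightarrow$ (i) is immediate from the definitions: if $e$ is an EV of $\ker(N)$ and a flux mode, then any flux mode $f$ with $\supp(f)\subseteq\supp(e)$ is, in particular, a nonzero element of $\ker(N)$, so the EV property forces $\supp(f)=\supp(e)$, and hence $e$ is an EFM.

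For the substantive direction (i) $\Rightarrow$ (ii), let $e$ be an EFM and let $f\in\ker(N)\setminus\{0\}$ with $\supp(f)\subseteq\supp(e)$; the goal is to show $\supp(f)=\supp(e)$. The strategy is a perturbation along the line $L(t)=e+tf$, which stays in $\ker(N)$ for every $t\in\RR$. Since $\FC$ is cut out of $\ker(N)$ by the finitely many half-space constraints $x_r\ge 0$, $r\in\mathcal{R}_\rightarrow$, and since $e_r>0$ for every $r\in\mathcal{R}_\rightarrow\cap\supp(e)$, the set
\[
T=\{t\in\RR : L(t)\in\FC\}
\]
is a closed interval containing an open neighbourhood of $0$. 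I aim to exhibit $t^{*}\in T\setminus\{0\}$ with $\supp(L(t^{*}))\subsetneq\supp(e)$: then $L(t^{*})\in\FC$ with strictly smaller support than the EFM $e$ can only hold if $L(t^{*})=0$, forcing $f=-e/t^{*}$ and hence $\supp(f)=\supp(e)$.

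Producing such a $t^{*}$ splits into two cases. If $T$ has a finite endpoint, the binding constraint there reads $e_{r_0}+t^{*}f_{r_0}=0$ for some $r_0\in\mathcal{R}_\rightarrow\cap\supp(e)$, so the $r_0$-coordinate of $L(t^{*})$ vanishes while the coordinates outside $\supp(e)$ were already zero (using $\supp(f)\subseteq\supp(e)$), giving $\supp(L(t^{*}))\subsetneq\supp(e)$. If instead $T=\RR$, then $(e+tf)_r\ge 0$ holding for every $t\in\RR$ and every $r\in\mathcal{R}_\rightarrow$ forces $f_r=0$ for all irreversible $r$; thus $\supp(f)\subseteq\mathcal{R}_\leftrightarrow\cap\supp(e)$, and picking any $r_1\in\supp(f)$ and setting $t^{*}=-e_{r_1}/f_{r_1}$ yields $L(t^{*})\in\FC$ with the $r_1$-coordinate killed, so again $\supp(L(t^{*}))\subsetneq\supp(e)$.

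The main obstacle is the degenerate sub-case $T=\RR$, where no irreversible constraint is ever binding: here one has to first observe that the entire irreversible support of $f$ must vanish, reducing the situation to a reversible-only perturbation in which the explicit rescaling above closes the argument.
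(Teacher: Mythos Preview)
Your proof is correct and follows essentially the same line-perturbation idea as the paper: move along $e+tf$ inside the flux cone until some coordinate in $\supp(e)$ vanishes, obtaining an element of $C$ with strictly smaller support, which by the EFM property must be zero and hence forces $f$ to be a scalar multiple of $e$. The paper organizes the case split slightly differently---it first assumes $\supp(f)\subsetneq\supp(e)$ by contradiction, checks whether $f$ itself already lies in $C$, and otherwise takes the largest $\lambda>0$ with $e+\lambda f\in C$---but the underlying argument is the same, and your explicit treatment of the case $T=\RR$ is a cleaner handling of what the paper absorbs into its first sub-case.
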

\begin{proof} (i) $\Rightarrow$ (ii):
We have to show that EFM~$e$ is an EV of $\ker(N)$.
Suppose \eqref{ev} is violated, that is, there exists $f \in \ker(N)$ with $f \neq 0$ and $\supp(f) \subset \supp(e)$.
If $f_\rr \ge 0$ for all $\rr \in \mathcal{R}_\rightarrow$, then $f \in \FC$ in contradiction to \eqref{efm}.
Otherwise, consider $f' = e + \lambda \, f$ with the largest scalar $\lambda>0$
such that $f'_\rr \ge 0$ for all $\rr \in \mathcal{R}_\rightarrow$.
Then, $f' \in \FC$ with $f' \neq 0$ and $\supp(f') \subset \supp(e)$ in contradiction to \eqref{efm}.
(ii) $\Rightarrow$ (i): Let $e$ be a flux mode and an EV of $\ker(N)$.
Clearly, \eqref{ev} implies \eqref{efm}, since $f \in \FC$ implies $f \in \ker(N)$.
\end{proof}

\subsection{Sign vectors and conformal sums} \label{subsec:sign}

We define the {\em sign vector} $\sigma(x) \in \{-,0,+\}^I$ of a vector $x \in \RR^I$
by applying the sign function component-wise. 
The relations $0<-$ and $0<+$ induce a partial order on $\{-,0,+\}^I$:
we write $X \le Y$ for $X,Y \in \{-,0,+\}^I$,
if the inequality holds component-wise.
For $x,y \in \RR^I$,
we say that $x$ {\em conforms} to $y$, if $\sigma(x) \le \sigma(y)$.
Analogously, for $x \in \RR^I$ and $X \in \{-,0,+\}^I$,
we say that $x$ conforms to $X$, if $\sigma(x) \le X$.

The following fundamental result about vectors and EVs will be rephrased for flux modes and EFMs.
For a proof, see~\cite[Theorem~1]{Rockafellar1969},~\cite[Proposition~5.35]{BachemKern1992} or~\cite[Lemma~6.7]{Ziegler1995}.
\begin{thm} \label{thm:conform}
Let $S \subseteq \RR^{\mathcal{R}}$ be a subspace.
Then every vector $f \in S$ is a conformal sum of EVs.
That is,
there exists a finite set $E$ of EVs conforming to $f$ such that
\[
f = \sum_{e \in E} e .
\]
\end{thm}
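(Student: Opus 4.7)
The plan is to proceed by induction on the cardinality $|\supp(f)|$. For the base case $f = 0$, take the empty conformal sum. For the inductive step with $f \neq 0$, the strategy is first to extract an EV $e \in S$ that conforms to $f$, then to subtract an appropriate positive multiple of it so that the remainder $f - \lambda e$ still conforms to $f$ but has strictly smaller support, and finally to apply the induction hypothesis to this remainder.

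Granting momentarily the existence of such a conforming EV $e$, the reduction step is straightforward. Because $\sigma(e) \le \sigma(f)$, on $\supp(e)$ the coordinates $e_i$ and $f_i$ share the same nonzero sign, so the ratio $f_i/e_i$ is strictly positive for every $i \in \supp(e)$ and
\[
\lambda = \min_{i \in \supp(e)} \frac{f_i}{e_i} > 0 .
\]
The vector $f' = f - \lambda e$ lies in $S$, and the identity $f'_i = e_i (f_i/e_i - \lambda)$ on $\supp(e)$ (together with $f'_i = f_i$ off $\supp(e)$) shows that $\sigma(f') \le \sigma(f)$, while at any index realizing the minimum one has $f'_i = 0$, so $|\supp(f')| < |\supp(f)|$. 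By induction, $f' = \sum_{e' \in E'} e'$ is a conformal sum of EVs conforming to $f'$, and since $f'$ conforms to $f$, each $e'$ conforms to $f$ as well. Appending $\lambda e$ (which is still an EV and conforms to $f$) yields the desired decomposition of $f$.

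The main obstacle is therefore the sublemma that every nonzero $f \in S$ admits an EV of $S$ conforming to it. The approach is to select, among the nonzero elements of $S$ conforming to $f$ (a nonempty collection, as it contains $f$, with supports lying inside the finite set $\supp(f)$), one with smallest support; call it $e$. I would then claim $e$ is an EV of $S$. If it were not, there would exist $h \in S \setminus \{0\}$ with $\supp(h) \subsetneq \supp(e)$, and the goal would be to combine $e$ and $h$ into a nonzero element of $S$ that conforms to $e$ (hence to $f$) and has support strictly smaller than $\supp(e)$, contradicting the minimal choice. After possibly replacing $h$ by $-h$, a sign-by-sign analysis on the indices $i \in \supp(h) \subseteq \supp(e)$ allows one to pick a scalar $\mu > 0$ tuned exactly as $\lambda$ above, so that $e - \mu h$ conforms to $e$ and vanishes at at least one index of $\supp(e)$; the strict inclusion $\supp(h) \subsetneq \supp(e)$ guarantees that $e - \mu h$ retains a nonzero coordinate outside $\supp(h)$ and is hence nonzero. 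I expect this sign case analysis to be the only delicate point of the whole argument; once settled, the induction and the subtraction step are essentially bookkeeping.
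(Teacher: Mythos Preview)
The paper does not supply its own proof of this theorem; it simply cites Rockafellar~\cite{Rockafellar1969}, Bachem--Kern~\cite{BachemKern1992}, and Ziegler~\cite{Ziegler1995}. Your proposal is essentially the classical argument found in those references: induct on $|\supp(f)|$, peel off a conforming EV with a suitably chosen positive scalar, and recurse on the remainder. So there is nothing to compare against in the paper itself, and your outline is the standard route.

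One small clarification regarding the sublemma. The phrase ``after possibly replacing $h$ by $-h$'' is slightly misleading: if $h$ has both sign-agreements and sign-disagreements with $e$ on $\supp(h)$, then so does $-h$, and the replacement changes nothing. The clean case split is: if either $h$ or $-h$ already conforms to $e$, you have an immediate contradiction to the minimality of $\supp(e)$; otherwise the set $I_+ = \{i : \sign(h_i) = \sign(e_i) \neq 0\}$ is nonempty, and $\mu = \min_{i \in I_+} e_i/h_i$ makes $e - \mu h$ conform to $e$, vanish on at least one index of $I_+$, and stay nonzero on any index of $\supp(e) \setminus \supp(h)$. You correctly flag this as the only delicate step, and with this split made explicit the argument goes through without issue.
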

The set $E$ can be chosen such that every $e \in E$ has a component which is non-zero in $e$,
but zero in all other elements of $E$.
Hence, $|E| \le \dim(S)$ and $|E| \le |\supp(f)|$.

It is easy to see that every flux mode is the conformal sum of EFMs.
For later use, we present a slightly rephrased version of this result.
We note that, if $e$ is an EFM,
then any element of the ray
\[
\{ \lambda \, e \,|\, \lambda>0 \}
\]
is an EFM.
Hence, we may refer to one representative EFM on each ray.

\begin{cor} \label{cor:tau}
Let $(\mathcal{S},\mathcal{R},N)$ be a metabolic network,
$\tau \in \{-,0,+\}^\mathcal{R}$ be a sign vector
and $E_\tau$ be a set of representative EFMs conforming to $\tau$.
Then, every flux mode $f \in \FC$ conforming to $\tau$
is a non-negative linear combination of elements of $E_\tau$:
\[
f = \sum_{e \in E_\tau} \alpha_e \, e \quad \text{ with } \alpha_e \ge 0 .
\]
\end{cor}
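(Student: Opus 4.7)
The plan is to reduce the corollary to Theorem~\ref{thm:conform} applied to the subspace $S = \ker(N)$, using Lemma~\ref{lem:efmev} to promote elementary vectors to elementary flux modes.

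First I would take the flux mode $f \in \FC$, which by definition lies in $\ker(N)$. Applying Theorem~\ref{thm:conform} to $S=\ker(N)$, I obtain a finite set $E$ of EVs of $\ker(N)$ conforming to $f$ with
\[
f = \sum_{e \in E} e , \quad \sigma(e) \le \sigma(f) \text{ for all } e \in E .
\]

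Next I would verify that each $e \in E$ is itself a flux mode. Since $f \in \FC$, we have $f_\rr \ge 0$ for $\rr \in \mathcal{R}_\rightarrow$, so $\sigma(f)_\rr \in \{0,+\}$ there. The conformity $\sigma(e) \le \sigma(f)$ then forces $\sigma(e)_\rr \in \{0,+\}$ as well, hence $e_\rr \ge 0$ for all $\rr \in \mathcal{R}_\rightarrow$. Since $e$ is non-zero by the definition of an EV, it is a flux mode. By Lemma~\ref{lem:efmev}, each such $e$ is therefore an EFM.

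Now I would use the hypothesis that $f$ conforms to $\tau$. Combining $\sigma(e) \le \sigma(f) \le \tau$, each $e \in E$ is an EFM conforming to $\tau$, so it lies on a ray represented by some $e' \in E_\tau$, meaning $e = \lambda_e \, e'$ for a unique $\lambda_e > 0$. Finally I would collect these contributions by setting, for each $e' \in E_\tau$,
\[
\alpha_{e'} = \sum_{\substack{e \in E \\ e \in \RR_> \, e'}} \lambda_e \ge 0 ,
\]
which yields $f = \sum_{e' \in E_\tau} \alpha_{e'} \, e'$ as required.

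The only mildly delicate step is the second one — checking that each elementary vector inherits both non-negativity on $\mathcal{R}_\rightarrow$ and non-vanishing from the conformity relation, so that Lemma~\ref{lem:efmev} actually applies. Everything else is bookkeeping: once Theorem~\ref{thm:conform} delivers the conformal decomposition, the structure of $\FC$ and the partial order on sign vectors do all the work, and no new analytic or combinatorial content is needed.
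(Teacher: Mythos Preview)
Your proof is correct and follows essentially the same route as the paper: apply Theorem~\ref{thm:conform} to $\ker(N)$, use conformity with $f \in \FC$ to see that each EV lies in $\FC$, invoke Lemma~\ref{lem:efmev} to identify them as EFMs, and then rewrite each as a positive multiple of a representative in $E_\tau$. Your version simply spells out the bookkeeping (the sign check on $\mathcal{R}_\rightarrow$ and the collection of coefficients) that the paper leaves implicit.
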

\begin{proof}
Clearly, $f \in \FC$ implies $f \in \ker N$.
By Theorem \ref{thm:conform}, $f$ is the conformal sum of EVs of $\ker N$.
However, for an EV $e \in \ker N$ to conform to $f \in \FC$, it is required that $e \in \FC$.
Hence, by Lemma \ref{lem:efmev}, $e$ is an EFM,
which can be written as a positive scalar multiple of a representative EFM.
\end{proof}

\subsection{Problem \col{reformulation}}

\col{
We start with the formal statement of an intuitive argument.
Consider a feasible solution of the enzyme allocation problem~\eqref{original}
and the corresponding steady-state flux:
if a reaction does not carry any flux,
then the corresponding optimal enzyme concentration is zero.
We add an appropriate constraint to the enzyme allocation problem
and obtain an equivalent optimization problem.
}%

\begin{lem} \label{lem:equivalent1}
\col{
Let $(\mathcal{S},\mathcal{R},N,v)$ be a kinetic metabolic network.
The enzyme allocation problem~\eqref{original}
is equivalent to the following optimization problem:
\begin{subequations} \label{original_add_con}
\begin{equation}
\max_{\bar{x} \in X, \, c \in \RR^\mathcal{R}_\ge} \bar{v}_\ro
\end{equation}
subject to
\begin{gather}
c_\rr = 0 \quad \text{if } \, \bar{v}_\rr = 0 , \\
\sum_{\rr \in \mathcal{R}} w_\rr \, c_\rr = c^\tot , \\
\bar{v} \in \FC , \, \bar{v}_\ro > 0 .
\end{gather}
\end{subequations}
}%
\end{lem}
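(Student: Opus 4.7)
The plan is to establish both inequalities between the optimal values of~\eqref{original} and~\eqref{original_add_con}. One direction is immediate: the feasible set of~\eqref{original_add_con} is obtained from that of~\eqref{original} by imposing the extra requirement (\ref{original_add_con}b), so every feasible point of~\eqref{original_add_con} is already feasible for~\eqref{original} with the same objective value, giving $\max\eqref{original_add_con} \le \max\eqref{original}$.

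For the reverse direction, I would start from an arbitrary feasible point $(\bar x, c)$ of~\eqref{original} with $\bar v = c \mal \ka(\bar x, p)$, and construct a feasible point of~\eqref{original_add_con} whose objective is no smaller than $\bar v_\ro$. The natural candidate is $c' \in \RR^\mathcal{R}_\ge$ defined by $c'_\rr = c_\rr$ on $\supp(\bar v)$ and $c'_\rr = 0$ elsewhere. Since $\bar v_\rr = c_\rr \, \ka_\rr(\bar x, p) = 0$ already forces the product to be zero, replacing $c_\rr$ by $0$ cannot change it; hence $c' \mal \ka(\bar x, p) = \bar v$ on the nose. Consequently $N \bar v = 0$, the sign conditions on irreversible reactions, and $\bar v_\ro > 0$ all transfer unchanged to $(\bar x, c')$, and (\ref{original_add_con}b) holds by construction of~$c'$.

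The only thing that can fail is the budget equality: by construction $\sum_\rr w_\rr \, c'_\rr \le \sum_\rr w_\rr \, c_\rr = c^\tot$, possibly with strict inequality. If equality holds, $(\bar x, c')$ is feasible for~\eqref{original_add_con} with objective~$\bar v_\ro$ and we are done. Otherwise, I set $\lambda = c^\tot / \sum_\rr w_\rr \, c'_\rr > 1$ and replace $c'$ by $\lambda \, c'$. Because $v = c \mal \ka(\bar x, p)$ is linear in $c$ at fixed $\bar x$, this scales the flux to $\lambda \, \bar v \in \FC$ (the flux cone being closed under positive scaling), preserves the support, and sends the objective to $\lambda \, \bar v_\ro > \bar v_\ro$. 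This yields $\max\eqref{original_add_con} \ge \max\eqref{original}$ and closes the equivalence.

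The argument has no serious obstacle; its one delicate point is that the multiplicative form $v = c \mal \ka(\bar x, p)$ is precisely what allows zeroing out ``phantom'' enzymes on inactive reactions without perturbing the steady-state equation, and what makes the subsequent rescaling preserve feasibility in both the flux cone and the budget.
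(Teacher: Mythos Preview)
Your argument is correct and follows essentially the same route as the paper: zero out the enzyme concentrations on reactions with $\bar v_\rr = 0$, then rescale by $\lambda = c^\tot / \sum_{\rr \in \supp(\bar v)} w_\rr c_\rr \ge 1$ to restore the budget, yielding a feasible point of~\eqref{original_add_con} with objective $\lambda\,\bar v_\ro \ge \bar v_\ro$. The only cosmetic difference is that the paper performs the zeroing and rescaling in a single step, whereas you separate them; both treatments rely on the multiplicative form $\bar v = c \mal \ka$ to ensure that the modified enzyme vector reproduces a positive multiple of the original flux.
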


\begin{proof}
\col{
For every feasible solution $(\bar{x},c)$ of \eqref{original}
with objective function $\bar{v}_\ro$,
we construct a feasible solution $(\bar{x},c')$ of \eqref{original_add_con}
with objective function $\bar{v}'_\ro \ge \bar{v}_\ro$:
Let $S_{\bar{v}} = \supp(\bar{v})$.
Using $\lambda = c^\tot / ( \sum_{\rr \in S_{\bar{v}}} w_\rr \, c_\rr ) \ge 1$,
we set
\[
c'_\rr =
\begin{cases}
\lambda \, c_\rr & \text{if } \rr \in S_{\bar{v}} , \\
0 & \text{if } \rr \notin S_{\bar{v}} .
\end{cases}
\]
}%
\col{
Clearly,
\[
\sum_{\rr \in \mathcal{R}} w_\rr \, c'_\rr
= \sum_{\rr \in S_{\bar{v}}} w_\rr \, c'_\rr
= \lambda \sum_{\rr \in S_{\bar{v}}} w_\rr \, c_\rr
= c^\tot .
\]
Further, $\bar{v}' = c' \mal \ka = \lambda \, (c \mal \ka) = \lambda \, \bar{v}$
implies $N \bar{v}' = 0$ and $\sigma(\bar{v}') = \sigma(\bar{v})$, that is, $\bar{v}' \in \FC$.
Hence, $(\bar{x},c')$ fulfills constraints (\ref{original_add_con}bcd)
and $\bar{v}'_\ro = \lambda \, \bar{v}_\ro \ge \bar{v}_\ro$.
}%
\end{proof}

\col{As a consequence,}
variation over enzyme concentrations can be replaced by variation over flux modes.

\begin{lem} \label{lem:equivalent2}
\col{
Let $(\mathcal{S},\mathcal{R},N,v)$ be a kinetic metabolic network.
The enzyme allocation problem~\eqref{original_add_con}
is equivalent to the following optimization problem over $\bar{x} \in X$ and $f \in \FC$:
\begin{subequations} \label{reformulated}
\begin{equation}
\min_{\bar{x} \in X, \, f \in C} \sum_{\rr \in \supp(\ka)} \frac{w_\rr f_\rr}{\ka_\rr(\bar{x},p)}
\end{equation}
subject to
\begin{gather}
\sigma(f) \le \sigma(\ka), \, f_\ro = 1 .
\end{gather}
\end{subequations}
Let $(\bar{x},c)$ and $(\bar{x},f)$ be corresponding feasible solutions of \eqref{original_add_con} and \eqref{reformulated}, respectively.
The product of the related objective functions amounts to
\begin{equation} \label{obj_fun}
\bar{v}_\ro \sum_{\rr \in \supp(\ka)} \frac{w_\rr f_\rr}{\ka_\rr} = c^\tot .
\end{equation}
}%
\end{lem}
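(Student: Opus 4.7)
The plan is to exhibit an explicit, invertible correspondence between feasible solutions of \eqref{original_add_con} and \eqref{reformulated} that, via the product identity \eqref{obj_fun}, converts maximization of $\bar{v}_\ro$ into minimization of the reformulated objective. The key is that $\bar{v} = c \mal \ka(\bar{x},p)$ combined with constraint (\ref{original_add_con}b) forces $c_\rr = 0$ wherever either the enzyme is unused or the kinetics vanishes, so a steady-state flux can be rescaled to a normalized flux mode $f$ with $f_\ro = 1$ without losing information.

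For the forward direction, starting from a feasible $(\bar{x}, c)$ of \eqref{original_add_con}, I would set $f := \bar{v}/\bar{v}_\ro$. Since $\bar{v}_\ro > 0$ and $\bar{v} \in \FC$, we get $f \in \FC$ and $f_\ro = 1$. The sign condition $\sigma(f) \le \sigma(\ka)$ is immediate from $\bar{v} = c \mal \ka$ with $c \ge 0$: on $\mathcal{R} \setminus \supp(\ka)$ the constraint (\ref{original_add_con}b) gives $c_\rr = 0$, hence $f_\rr = 0$ there. On $\supp(\ka)$ we read off $c_\rr = \bar{v}_\ro f_\rr / \ka_\rr(\bar{x},p)$. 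For the backward direction, starting from a feasible $(\bar{x}, f)$ of \eqref{reformulated}, define
\[
\lambda := \frac{c^\tot}{\sum_{\rr \in \supp(\ka)} w_\rr f_\rr / \ka_\rr(\bar{x}, p)}
\]
and set $c_\rr := \lambda f_\rr / \ka_\rr(\bar{x},p)$ on $\supp(\ka)$, $c_\rr := 0$ elsewhere. The sign compatibility $\sigma(f) \le \sigma(\ka)$ makes each ratio $f_\rr / \ka_\rr$ non-negative, so $c \ge 0$; the denominator in $\lambda$ is positive because $f_\ro = 1$ forces $f \ne 0$ and every surviving term is non-negative. Then $\bar{v} = c \mal \ka = \lambda f$, so $\bar{v} \in \FC$ with $\bar{v}_\ro = \lambda > 0$, constraint (\ref{original_add_con}b) holds by construction, and (\ref{original_add_con}c) holds by the choice of $\lambda$.

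The product identity \eqref{obj_fun} then follows immediately from $\bar{v}_\ro = \lambda$ together with the definition of $\lambda$, which says exactly that $\bar{v}_\ro$ times the reformulated objective equals the constant $c^\tot$. Since $c^\tot > 0$ is fixed and both objectives are positive, maximizing one is equivalent to minimizing the other. The only subtlety is the bookkeeping at indices outside $\supp(\ka)$: constraint (\ref{original_add_con}b) handles this in one direction and the sign constraint $\sigma(f) \le \sigma(\ka)$ in the other, so both objective sums legitimately restrict to $\supp(\ka)$ without ambiguity. I expect this to be the main (though minor) pitfall, since nothing in the original formulation makes $\supp(\ka)$ explicit; once the added constraint (\ref{original_add_con}b) of Lemma \ref{lem:equivalent1} is in place, the correspondence becomes routine and no further optimization-theoretic argument is needed.
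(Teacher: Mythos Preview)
Your proposal is correct and follows essentially the same route as the paper's own proof: in both cases the bijection is given by $f = \bar{v}/\bar{v}_\ro$ in one direction and $c_\rr = \bar{v}_\ro f_\rr/\ka_\rr$ on $\supp(\ka)$ (with $c_\rr=0$ elsewhere) in the other, and the product identity \eqref{obj_fun} is obtained by substituting this expression for $c_\rr$ into the sum constraint $\sum_\rr w_\rr c_\rr = c^\tot$. Your explicit naming of $\lambda$ and your remark about the bookkeeping outside $\supp(\ka)$ match exactly what the paper does with its shorthand $S_\ka = \supp(\ka)$.
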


\begin{proof}
\col{
We show that,
for every feasible solution $(\bar{x},c)$ of \eqref{original_add_con},
there exists a feasible solution $(\bar{x},f)$ of \eqref{reformulated}, and vice versa.
Moreover, that the related objective functions fulfill Equation~\eqref{obj_fun}.
}%

\col{
Assume that $(\bar{x},c)$ is a feasible solution of \eqref{original_add_con}.
Define $f = \bar{v} / \bar{v}_\ro$.
Clearly, $f \in \FC$,  $f_\ro = 1$,
and $\sigma(f) = \sigma(\bar{v}) = \sigma(c \circ \ka) \le \sigma(\ka)$.
Hence,
$(\bar{x},f)$ is a feasible solution of \eqref{reformulated}.
Let $S_\ka = \supp(\ka)$.
Using $\bar{v} = c \circ \ka = \bar{v}_\ro f$ and hence $c_\rr = \bar{v}_\ro f_\rr / \ka_\rr$ for $\rr \in S_\ka$,
we rewrite the sum constraint and obtain the desired Equation~\eqref{obj_fun}:
\[
c^\tot
= \sum_{\rr \in \mathcal{R}} w_\rr \, c_\rr
= \sum_{\rr \in S_\ka} w_\rr \, c_\rr
= \bar{v}_\ro \sum_{\rr \in S_\ka} \frac{w_\rr f_\rr}{\ka_\rr} .
\]
}%

\col{
Conversely, assume that $(\bar{x},f)$ is a feasible solution of \eqref{reformulated}.
Since $\sigma(f) \le \sigma(\ka)$,
we can define $\bar{v}_\ro > 0$ by Equation~\eqref{obj_fun},
and we set $c_\rr = \bar{v}_\ro f_\rr / \ka_\rr$ for $r \in S_\ka$
and $c_\rr = 0$ for $r \notin S_\ka$.
Clearly, $c_\rr = 0$ if $\bar{v}_\rr = c_\rr \, \ka_\rr = 0$.
Further,
\[
\sum_{\rr \in \mathcal{R}} w_\rr \, c_\rr
= \sum_{\rr \in S_\ka} w_\rr \, c_\rr
= \bar{v}_\ro \sum_{\rr \in S_\ka} \frac{w_\rr f_\rr}{\ka_\rr}
= c^\tot .
\]
By definition,
$\bar{v}_\rr = c_\rr \, \ka_\rr = \bar{v}_\ro f_\rr$ for $r \in S_\ka$,
and, since $\sigma(f) \le \sigma(\ka)$,
$\bar{v}_\rr = c_\rr \, \ka_\rr = 0$ and $f_\rr = 0$ for $r \notin S_\ka$.
That is, $\bar{v} = \bar{v}_\ro f \in \FC$.
Hence,
$(\bar{x},c)$ is a feasible solution of \eqref{original_add_con}.
}%
\end{proof}

We note that the inequality constraints involving the kinetics may be unfeasible.
\col{For given flux mode $f \in \FC$,}
the existence of steady-state metabolite concentrations $\bar{x} \in \RR^\mathcal{S}_\ge$
such that $\col{\sigma(f)} \le \sigma(\ka(\bar{x},p))$
is equivalent to the existence of chemical potentials $\mu \in \RR^\mathcal{S}$ such that $\col{\sigma(f)} \le \sigma(- (\mu N)^T)$.
Whereas conventional FBA has to be augmented with thermodynamic constraints~\cite{BeardBabsonCurtisQian2004,MuellerBockmayr2013},
they are incorporated in the definition of a metabolic network \col{with known kinetics}.

\subsection{\col{Main results}}

The next statement characterizes \col{optimal} solutions of the enzyme allocation problem
\col{for fixed metabolite concentrations}.
Its proof involves the result on conformal sums obtained in Subsection~\ref{subsec:sign}.

\begin{pro} \label{pro:main1}
Let $(\mathcal{S},\mathcal{R},N,v)$ be a kinetic metabolic network.
\col{Consider} the enzyme allocation problem~\eqref{original} \col{for fixed $\bar{x} \in X$.}
\col{If this restricted optimization problem is feasible,}
then it has an optimal solution
for which the corresponding steady-state flux is an EFM.
\end{pro}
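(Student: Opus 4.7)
The plan is to pass to the reformulation of Lemma~\ref{lem:equivalent2}: for the fixed $\bar{x} \in X$, minimize the linear functional
\[
F(f) = \sum_{\rr \in \supp(\ka)} \frac{w_\rr \, f_\rr}{\ka_\rr(\bar{x},p)}
\]
over flux modes $f \in \FC$ satisfying $\sigma(f) \le \sigma(\ka(\bar{x},p))$ and $f_\ro = 1$. Once I exhibit an optimal $f$ of this reduced problem that is itself an EFM, the correspondence of Lemma~\ref{lem:equivalent2} transports it back to an optimal $(\bar{x},c)$ of~\eqref{original} whose steady-state flux $\bar{v} = \bar{v}_\ro \, f$ is a positive multiple of an EFM, and hence an EFM.

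First I would set $\tau = \sigma(\ka(\bar{x},p))$ and invoke Corollary~\ref{cor:tau} to fix a finite set $E_\tau$ of representative EFMs conforming to $\tau$. Every feasible $f$ then decomposes as $f = \sum_{e \in E_\tau} \alpha_e \, e$ with $\alpha_e \ge 0$, so the objective becomes linear in the coefficients:
\[
F(f) = \sum_{e \in E_\tau} \alpha_e \, g_e, \qquad g_e := \sum_{\rr \in \supp(\ka)} \frac{w_\rr \, e_\rr}{\ka_\rr(\bar{x},p)}.
\]
Since $\supp(e) \subseteq \supp(\tau) = \supp(\ka)$ and $e \neq 0$, each $g_e$ is a finite, strictly positive number.

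Next I would handle the normalization $f_\ro = 1$. Feasibility forces $\tau_\ro = +$, so every $e \in E_\tau$ has $e_\ro \ge 0$. Partition $E_\tau = E^{+} \cup E^{0}$ according to $e_\ro > 0$ or $e_\ro = 0$, and rescale the representatives in $E^{+}$ so that $e_\ro = 1$. The constraint then reads $\sum_{e \in E^{+}} \alpha_e = 1$, while coefficients on $E^{0}$ contribute $\alpha_e \, g_e \ge 0$ to the objective without affecting feasibility. Zeroing those coefficients can only decrease $F(f)$, reducing the task to minimizing the linear function $\sum_{e \in E^{+}} \alpha_e \, g_e$ over the simplex $\{\alpha \ge 0 : \sum_{e \in E^{+}} \alpha_e = 1\}$. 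Feasibility of the restricted problem forces $E^{+} \neq \emptyset$, and the minimum of a linear function over a non-empty finite simplex is attained at a vertex, namely $\alpha_{e^*} = 1$ for some $e^* \in E^{+}$ realizing the smallest $g_{e^*}$. The corresponding $f = e^*$ is an EFM, as required.

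The main obstacle is not the optimization step itself — which reduces to the elementary observation that a linear function on a simplex attains its minimum at a vertex — but rather justifying the reduction to the finite set $E_\tau$ and then to $E^{+}$. The key input is Corollary~\ref{cor:tau}, which guarantees that any competing flux mode is already a non-negative combination of \emph{EFMs} (not merely elementary vectors of $\ker N$) all conforming to the kinetic sign pattern $\tau$; once this is in hand, linearity of $F$, positivity of each $g_e$ on $\supp(\ka)$, and finiteness of $E_\tau$ do the rest.
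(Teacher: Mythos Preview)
Your proposal is correct and follows essentially the same argument as the paper's proof: pass to the reformulated problem, decompose any feasible flux mode via Corollary~\ref{cor:tau}, note that the objective is linear in the coefficients with each $g_e>0$, set the $E^{0}$-coefficients to zero, and pick a vertex of the remaining simplex. The only minor slip is that the passage from problem~\eqref{original} to problem~\eqref{reformulated} requires both Lemma~\ref{lem:equivalent1} and Lemma~\ref{lem:equivalent2}, not just the latter.
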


\begin{proof}
\col{
By Lemmas~\ref{lem:equivalent1} and \ref{lem:equivalent2},
the enzyme allocation problem~\eqref{original} is equivalent to optimization problem~\eqref{reformulated}.
We consider \eqref{reformulated} for fixed $\bar{x} \in X$ and assume that this restricted problem is feasible.
}%

\col{
We write $\ka$ short for $\ka(\bar{x},p)$
and introduce $\tau = \sigma(\ka)$ and $S_\ka = \supp(\ka)$.
In \eqref{reformulated},
we vary over $f \in \FC$ such that $\sigma(f) \le \sigma(\ka) = \tau$ and $f_\ro = 1$.
}%
By Corollary \ref{cor:tau},
every flux mode $f \in \FC$ conforming to $\tau$ is a non-negative linear combination of elements of $E_\tau$,
which is a set of representative EFMs conforming to $\tau$.
We assume the EFMs to be scaled by component $\ro$
and divide the set $E_\tau$ into two subsets, $E_\tau = E_1 \cup E_0$,
such that $e \in E_1$ implies $e_\ro = 1$ and $e \in E_0$ implies $e_\ro = 0$.
We have:
\[
f = \sum_{e \in E_1} \alpha_e \, e
  + \sum_{e \in E_0} \beta_e \, e
\quad \text{ with }  \alpha_e, \beta_e \ge 0 .
\]
From $f_\ro = 1$, we obtain the constraint
\[
1 = f_\ro
= \sum_{e \in E_1} \alpha_e \, e_\ro
+ \sum_{e \in E_0} \beta_e  \, e_\ro
= \sum_{e \in E_1} \alpha_e .
\]

Using the conformal sum for $f$ in \eqref{reformulated},
we obtain an equivalent formulation of the restricted problem:
\begin{subequations}
\begin{equation}
\min_{\alpha_e,\beta_e} \sum_{\rr \in S_\ka}
\frac{w_\rr \left( \sum_{e \in E_1} \alpha_e \, e_\rr
    + \sum_{e \in E_0} \beta_e  \, e_\rr \right)}{\ka_\rr}
\end{equation}
subject to
\begin{gather}
\sum_{e \in E_1} \alpha_e = 1 .
\end{gather}
\end{subequations}

We observe that the objective function is linear in $\alpha_e$ and $\beta_e$:
\[
  \sum_{e \in E_1} \alpha_e \underbrace{ \sum_{\rr \in S_\ka} \frac{w_\rr \, e_\rr}{\ka_\rr} }_{g_e}
+ \sum_{e \in E_0} \beta_e  \underbrace{ \sum_{\rr \in S_\ka} \frac{w_\rr \, e_\rr}{\ka_\rr} }_{g_e}
\]
with
\[
g_e =
\sum_{\rr \in S_\ka} \frac{w_\rr \, e_\rr}{\ka_\rr} \quad \col{\text{for } e \in E_\tau} .
\]
Since all $e \in E_\tau$ conform to \col{$\tau = \sigma(\ka)$, that is}, $\sigma(e) \le \sigma(\ka)$,
we have $\frac{w_\rr \, e_\rr}{\ka_\rr} \ge 0$ for all $e \in E_\tau$ and $\rr \in S_\ka$. 
Moreover, for all $e \in E_\tau$, there is $\rr \in S_\ka$ such that $e_\rr \neq 0$ and hence $\frac{w_\rr \, e_\rr}{\ka_\rr} > 0$.
Consequently, $g_e>0$ for all $e \in E_\tau$.

Since there is no further restriction on $\beta_e \ge 0$,
the minimum of the objective function is attained at $\beta_e = 0$ for all $e \in E_0$.
In other words, EFMs $e \in E_0$ do not contribute to the optimal solution.

\col{Let $e' \in E_1$ be an EFM such that $g_{e'} \le g_e$ for all $e \in E_1$.}
Since $\alpha_e \ge 0$ and $\sum_{e \in E_1} \alpha_e = 1$, we have
\col{
\[
\sum_{e \in E_1} \alpha_e \, g_e \ge \sum_{e \in E_1} \alpha_e \, g_{e'} = g_{e'} ,
\]
}%
and the minimum of the objective function is attained at $\alpha_{e'} = 1$ and $\alpha_e = 0$ for all other $e \in E_1$,
that is, for $f=e'$.
\col{To conclude}, we consider a degenerate case:
If there are several $e \in E_{\min} \subseteq E_1$ for which $g_e$ is minimal,
then any $f = \sum_{e \in E_{\min}} \alpha_e \, e$
(with $\alpha_e\ge0 $ and $\sum_{e \in E_{\min}} \alpha_e = 1$) is optimal.
\end{proof}

The following statement is the main result of this work.

\begin{thm} \label{thm:main2}
Let $(\mathcal{S},\mathcal{R},N,v)$ be a kinetic metabolic network.
\col{
If the enzyme allocation problem~\eqref{original} has an optimal solution,}
then it has an optimal solution
for which the corresponding steady-state flux is an EFM.
\end{thm}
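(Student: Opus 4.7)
The plan is to deduce Theorem~\ref{thm:main2} directly from Proposition~\ref{pro:main1} by a simple reduction: any optimal solution of the full problem yields an optimal solution of the problem in which the metabolite concentrations have been frozen at their optimal value, and the latter problem admits an EFM optimizer by Proposition~\ref{pro:main1}.

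More concretely, suppose the enzyme allocation problem~\eqref{original} has an optimal solution $(\bar{x}^*, c^*) \in X \times \RR_\ge^\mathcal{R}$, with optimal objective value $\bar{v}^*_\ro > 0$. First I would consider the restricted optimization problem obtained from~\eqref{original} by fixing $\bar{x} = \bar{x}^*$ and varying only $c \in \RR_\ge^\mathcal{R}$. This restricted problem is feasible, since $(\bar{x}^*, c^*)$ is itself a feasible point, so Proposition~\ref{pro:main1} applies and produces an optimal solution $(\bar{x}^*, c')$ of the restricted problem whose associated steady-state flux $\bar{v}' = c' \mal \ka(\bar{x}^*, p)$ is an EFM.

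Next I would compare the two optimal values. Since the restricted problem is obtained from~\eqref{original} by adding the constraint $\bar{x} = \bar{x}^*$, its optimal value is at most $\bar{v}^*_\ro$; conversely, $c^*$ is feasible for the restricted problem and achieves $\bar{v}^*_\ro$, so the restricted optimal value is at least $\bar{v}^*_\ro$. Hence the two values coincide, which means that $(\bar{x}^*, c')$ is also an optimal solution of~\eqref{original}, and its corresponding steady-state flux $\bar{v}'$ is an EFM. This is exactly the conclusion of the theorem.

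I do not expect a genuine obstacle here, because all the real work has already been done in Proposition~\ref{pro:main1}; the only thing to watch is the book-keeping in the reduction. In particular, one should be careful that ``fixing $\bar{x}$'' is a legitimate specialization of~\eqref{original} and that the strict inequality $\bar{v}_\ro > 0$ in (\ref{original}c) is preserved (which it is, because $\bar{v}^*_\ro>0$ and the restricted optimum is at least $\bar{v}^*_\ro$). No further structural assumptions on $X$, on the kinetics $\ka$, or on attainability are required, since those are already absorbed into the hypothesis that~\eqref{original} has an optimal solution.
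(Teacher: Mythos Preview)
Your proposal is correct and follows essentially the same approach as the paper: fix the metabolite concentrations at an optimizer $\bar{x}^*$, observe that the restricted problem is feasible, and invoke Proposition~\ref{pro:main1}. Your version is slightly more explicit than the paper's in verifying that the restricted optimum coincides with the full optimum (and hence that the EFM solution is optimal for the unrestricted problem), but this is routine bookkeeping rather than a different argument.
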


\begin{proof}
\col{
Let an optimal solution of the enzyme allocation problem~\eqref{original} be attained at $\bar{x} \in X$.
Clearly, optimization problem~\eqref{original} restricted to this particular $\bar{x}$ is feasible.
By Proposition~\ref{pro:main1},
this restricted problem has an optimal solution
for which the corresponding steady-state flux is an EFM.
}%
\end{proof}

In applications, we use Theorem~\ref{thm:main2} to study the switching behavior
of kinetic metabolic networks.
Depending on external parameters,
the optimal solution of the enzyme allocation problem may switch from one EFM to another,
involving a discontinuous change of enzyme and metabolite concentrations.
In a first approach,
one may vary the external parameters
and determine the optimal solution for each EFM
in order to find the optimal solution of the full problem.
To this end, we transform the optimization problem restricted to an EFM.

\begin{cor} \label{cor:efm}
Let $(\mathcal{S},\mathcal{R},N,v)$ be a kinetic metabolic network.
In the enzyme allocation problem~\eqref{original},
let the steady-state flux be restricted to $\{ \lambda \, e \,|\, \lambda>0 \}$,
where $e \in \FC$ is an EFM \col{with $e_\ro = 1$}.
Then,
\col{this} restricted optimization problem is equivalent to \col{the following} optimization problem over $\bar{x} \in X$:
\begin{subequations} \label{reformulated_efm}
\begin{equation}
\min_{\bar{x} \in X} \sum_{\rr \in \supp(e)} \frac{w_\rr \, e_\rr}{\ka_\rr(\bar{x},p)}
\end{equation}
subject to
\begin{gather}
\col{\sigma(e) \le \sigma(\ka) .}
\end{gather}
\end{subequations}
The corresponding enzyme concentrations $c \in \RR^\mathcal{R}_\ge$ \col{are given by}
\begin{equation}
c_\rr = c^\tot \, \frac{ \frac{e_\rr}{\ka_\rr(\bar{x},p)} }
{ \sum_{s \in \supp(e)} \frac{w_s \, e_s}{\ka_s(\bar{x},p)} } .
\end{equation}
\end{cor}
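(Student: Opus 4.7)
The plan is to deduce the corollary from Lemma~\ref{lem:equivalent2}, which already reformulates the full enzyme allocation problem~\eqref{original} as the minimization problem~\eqref{reformulated} over $\bar{x} \in X$ and $f \in \FC$. Restricting the steady-state flux $\bar{v}$ of \eqref{original} to the ray $\{\lambda\, e \mid \lambda>0\}$ translates, via the correspondence $f = \bar{v}/\bar{v}_\ro$ used in the proof of Lemma~\ref{lem:equivalent2}, to restricting $f$ to the same ray. Combined with the normalization $f_\ro = 1$ in \eqref{reformulated} and the hypothesis $e_\ro = 1$, this forces $f = e$, collapsing the minimization over $f$ and leaving only the minimization over $\bar{x} \in X$.

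Next, I would verify that the objective and feasibility constraints of \eqref{reformulated} specialize to those of \eqref{reformulated_efm}. The constraint $\sigma(f) \le \sigma(\ka)$ becomes exactly $\sigma(e) \le \sigma(\ka)$. For the objective, note that this sign-vector inequality implies $\supp(e) \subseteq \supp(\ka)$, so the terms indexed by $\rr \in \supp(\ka) \setminus \supp(e)$ have $e_\rr = 0$ and contribute zero. Therefore
\[
\sum_{\rr \in \supp(\ka)} \frac{w_\rr\, f_\rr}{\ka_\rr(\bar{x},p)}
\;=\; \sum_{\rr \in \supp(e)} \frac{w_\rr\, e_\rr}{\ka_\rr(\bar{x},p)} ,
\]
matching \eqref{reformulated_efm}(a).

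Finally, I would read off the enzyme concentrations directly from the proof of Lemma~\ref{lem:equivalent2}: from Equation~\eqref{obj_fun} with $f = e$, the optimal flux value is
\[
\bar{v}_\ro \;=\; \frac{c^\tot}{\displaystyle\sum_{s \in \supp(\ka)} \frac{w_s\, e_s}{\ka_s(\bar{x},p)}}
\;=\; \frac{c^\tot}{\displaystyle\sum_{s \in \supp(e)} \frac{w_s\, e_s}{\ka_s(\bar{x},p)}} ,
\]
and then $c_\rr = \bar{v}_\ro\, e_\rr/\ka_\rr(\bar{x},p)$ for $\rr \in \supp(\ka)$ (and $c_\rr = 0$ otherwise), which yields the displayed formula for $c_\rr$.

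There is no genuine obstacle here beyond careful bookkeeping: the only subtlety is keeping the two support sets $\supp(e) \subseteq \supp(\ka)$ consistent when rewriting the sum, and checking that the ray restriction on $\bar{v}$ translates cleanly through the equivalence of Lemma~\ref{lem:equivalent2}. Since all structural work has already been done in Lemmas~\ref{lem:equivalent1} and~\ref{lem:equivalent2}, the corollary reduces to a direct substitution.
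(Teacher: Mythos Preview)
Your proposal is correct and follows essentially the same route as the paper: invoke Lemmas~\ref{lem:equivalent1} and~\ref{lem:equivalent2} to pass to problem~\eqref{reformulated}, observe that the ray restriction together with $f_\ro=1$ and $e_\ro=1$ forces $f=e$, use $\sigma(e)\le\sigma(\ka)\Rightarrow\supp(e)\subseteq\supp(\ka)$ to reduce the summation range, and recover $c_\rr$ from $\bar{v}_\ro = c^\tot/\sum_s w_s e_s/\ka_s$ and $c_\rr=\bar{v}_\ro e_\rr/\ka_\rr$. The paper phrases the last step via $c\circ\ka=\lambda e$ and solving for $\lambda$, but since $\lambda=\bar{v}_\ro$ this is the same computation.
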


\begin{proof}
\col{
By Lemmas~\ref{lem:equivalent1} and \ref{lem:equivalent2},
the enzyme allocation problem~\eqref{original} is equivalent to optimization problem~\eqref{reformulated}.
Hence, we consider \eqref{reformulated} for fixed $f=e$.
Clearly, this restricted problem is feasible if and only if optimization problem~\eqref{reformulated_efm} is feasible.
If $\sigma(e) \le \sigma(\ka)$, then $\supp(e) \subseteq \supp(\ka)$ and hence
\[
\sum_{\rr \in \supp(\ka)} \frac{w_\rr \, e_\rr}{\ka_\rr(\bar{x},p)}
=
\sum_{\rr \in \supp(e)} \frac{w_\rr \, e_\rr}{\ka_\rr(\bar{x},p)} .
\]
That is, the objective functions of the two optimization problems are identical.
}%

\col{
By using $c \circ \ka = \lambda \, e$ \col{with $\lambda > 0$}
and the constraint $\sum_{\rr \in \mathcal{R}} w_\rr \, c_\rr = c^\tot$,
we obtain $\lambda = c^\tot / ( \sum_{\rr \in \supp(e)} \frac{w_\rr \, e_\rr}{\ka_\rr} )$
and hence $c_\rr = \lambda \, \frac{e_\rr}{\ka_\rr}$ for $\rr \in \mathcal{R}$.
}%
\end{proof}


\section{Discussion and outlook}

In this work, we have studied a general metabolic optimization problem under enzymatic \col{capacity} constraints.
\col{Our analysis was motivated by the fact} that the total enzymatic capacity of a metabolic
network is limited by finite shared resources,
such that an increase in the concentration of one enzyme
necessitates a decrease in the concentration of one or more other enzymes.
This scenario can be caused by molecular crowding, where enzymes compete for cytosolic space,
limited membrane space, a finite availability of \col{macro-nutrients, such as nitrogen or phosphorus, 
or micro-nutrients, such as transition metals,}
as well as a \col{limited} energy expenditure for amino-acid synthesis.
In each of these instances,
the global constraint can be formulated in terms of a weighted sum of enzyme concentrations,
where the weight factors specify the fraction of the shared resource utilized per unit enzyme. 

Recently, such global constraints have been incorporated into large-scale stoichiometric models of metabolism,
most notably to explain the occurrence of low-yield pathways~\cite{Schuster2011,Shlomi2011,Vazquez2011,Zhuang2011}. 
However, these works were restricted to linear stoichiometric optimization problems 
and did not consider enzyme concentrations within a kinetic description of a metabolic network. 
Here, we have addressed a non-linear kinetic optimization problem,
where we assume that each reaction can be catalyzed by an enzyme,
but allow \col{for} arbitrary enzyme kinetics.
Most importantly, we have derived a rigorous proof
that an optimal flux distribution under global enzymatic constraints is necessarily an elementary flux mode. 
This finding has significant consequences for our understanding of \col{metabolic optimality and} metabolic switches
as well as for the computational identification of optimal fluxes in kinetic metabolic networks. 

In particular, our results allow us to efficiently compute and compare the optimal enzyme distributions for individual flux modes.
In \col{a} recent work~\cite{Flamholz2013}, the trade-off between energy yield and protein cost
was studied for several alternative prokaryotic glycolytic pathways.
To this end, different pathway designs were compared
by fixing the output flux at a certain value and estimating the necessary protein investments. 
Clearly, such an approach does not exclude the possibility that a combined pathway might incur an even lower enzymatic cost.
However, our result shows that it is indeed sufficient to compare the solutions for EFMs.

Notwithstanding its theoretical merits, 
the practical implications of our approach have to be studied further.
Firstly, for genome-scale metabolic networks,
an exhaustive evaluation of all EFMs is computationally infeasible.
A natural next step is to design a combined linear/non-linear optimization algorithm
for the identification of optimal EFMs.
Secondly, while the enzymatic capacity constraint 
is applicable to a multitude of possible limitations,
our approach may be extended to include {\em de novo} synthesis or uptake of limiting resources.
We conjecture that the problem can still be formulated in a way that our overall conclusions remain valid
\col{if such more general models are considered}.
Thirdly, the effects of co-limitation in cellular metabolism
and the simultaneous optimization of more than one objective function
pose new challenges for theoretical analysis.

\col{Finally, we note that experimentally observed flux distributions are not necessarily always EFMs. 
While some well-known metabolic switches indeed show an exclusive choice between alternative metabolic states,
such as most instances of catabolite repression,
it is known that alternative metabolic strategies sometimes operate simultaneously,
such as a residual respiration in cancer cells~\cite{Moreno-Sanchez2007}. 
Whether such a co-occurrence of metabolic strategies arises due to additional constraints or optimality principles,
or to what extent these instances are sub-optimal adaptations, remains to be studied.
We note that mixed strategies were also observed in recent computational studies using large-scale FBA models~\cite{Vazquez2010}
as well as for integrated models of enzyme synthesis and metabolism
to investigate shifts in growth strategies~\cite{Molenaar2009}.
Again, our analytical results provide a strong incentive
to study the consequences of additional constraints or optimality principles}
on the solutions of metabolic optimization problems.
We believe that only the further understanding of the properties of optimal \col{flux distributions}
will allow us to investigate the fundamental trade-offs in cellular resource allocation.

%
%

\subsubsection*{Acknowledgments}

\col{
We acknowledge numerous helpful comments from two anonymous reviewers.
During the revision process, we were informed by the authors of \cite{Wortel2013}
that they have arrived at similar conclusions as presented in this work.
The results in the two papers were derived independently.
}%

\subsubsection*{Funding Statement}

This work is an output of the project
``Local Team and International Consortium for Computational Modeling of a Cyanobacterial Cell'',
Reg.~No.~CZ.1.07/ 2.3.00/20.0256, financed by the European Union and supported by the Ministry of Education of the Czech Republic (SM, RS),
as well as the project
``\"Ubergangsmetalle und phototrophes Wachstum: Ein neuer Ansatz der con\-straint-basierten Modellierung gro{\ss}er Stoffwechselnetzwerke''
funded by the Einstein Stiftung Berlin (RS).
The funding bodies had no role in study design, data analysis, and the decision to submit the manuscript.


\bibliographystyle{plain}
\bibliography{flux_opt,flux_opt_bio}

\begin{thebibliography}{10}

\bibitem{Antoniewicz2013}
Maciek~R. Antoniewicz.
\newblock Dynamic metabolic flux analysis-tools for probing transient states of
  metabolic networks.
\newblock {\em Curr Opin Biotechnol}, Apr 2013.

\bibitem{BachemKern1992}
Achim Bachem and Walter Kern.
\newblock {\em Linear programming duality}.
\newblock Springer-Verlag, Berlin, 1992.
\newblock An introduction to oriented matroids.

\bibitem{BeardBabsonCurtisQian2004}
Daniel~A. Beard, Eric Babson, Edward Curtis, and Hong Qian.
\newblock Thermodynamic constraints for biochemical networks.
\newblock {\em J. Theoret. Biol.}, 228(3):327--333, 2004.

\bibitem{Beg2007}
Qasim~K. Beg, Alexei Vazquez, Jason Ernst, Marcio~A. de~Menezes, Ziv
  Bar-Joseph, Albert-L. Barab\'asi, and Zolt\'{a}n~N. Oltvai.
\newblock Intracellular crowding defines the mode and sequence of substrate
  uptake by {E}scherichia coli and constrains its metabolic activity.
\newblock {\em Proc Natl Acad Sci U S A}, 104(31):12663--12668, Jul 2007.

\bibitem{Berkhout2012}
Jan Berkhout, Frank~J. Bruggeman, and Bas Teusink.
\newblock Optimality principles in the regulation of metabolic networks.
\newblock {\em Metabolites}, 2:529--552, 2012.

\bibitem{BjornerLasSturmfelsWhiteZiegler1999}
Anders Bj{\"o}rner, Michel Las~Vergnas, Bernd Sturmfels, Neil White, and
  G{\"u}nter~M. Ziegler.
\newblock {\em Oriented matroids}, volume~46 of {\em Encyclopedia Math. Appl.}
\newblock Cambridge University Press, Cambridge, second edition, 1999.

\bibitem{Crabtree1928}
Herbert~G. Crabtree.
\newblock The carbohydrate metabolism of certain pathological overgrowths.
\newblock {\em Biochem J}, 22(5):1289--1298, 1928.

\bibitem{Flamholz2013}
Avi Flamholz, Elad Noor, Arren Bar-Even, Wolfram Liebermeister, and Ron Milo.
\newblock Glycolytic strategy as a tradeoff between energy yield and protein
  cost.
\newblock {\em Proc Natl Acad Sci U S A}, 110(24):10039--10044, Jun 2013.

\bibitem{GagneurKlamt2004}
Julien Gagneur and Steffen Klamt.
\newblock {{C}omputation of elementary modes: a unifying framework and the new
  binary approach}.
\newblock {\em BMC Bioinformatics}, 5:175, 2004.

\bibitem{Goel2012}
Anisha Goel, Meike~T. Wortel, Douwe Molenaar, and Bas Teusink.
\newblock Metabolic shifts: a fitness perspective for microbial cell factories.
\newblock {\em Biotechnol Lett}, 34(12):2147--2160, Dec 2012.

\bibitem{Goelzer2011a}
Anne Goelzer, Vincent Fromion, and Gerard Scorletti.
\newblock Cell design in bacteria as a convex optimization problem.
\newblock {\em Automatica}, 47:1210--1218, 2011.

\bibitem{Heinrich1996}
Reinhart Heinrich and Stefan Schuster.
\newblock {\em The Regulation of Cellular Systems}.
\newblock Chapman and Hall, 1996.

\bibitem{Hsu2008}
Peggy~P. Hsu and David~M. Sabatini.
\newblock Cancer cell metabolism: Warburg and beyond.
\newblock {\em Cell}, 134(5):703--707, Sep 2008.

\bibitem{KlamtStelling2003}
Steffen Klamt and J{\"o}rg Stelling.
\newblock {{T}wo approaches for metabolic pathway analysis?}
\newblock {\em Trends Biotechnol.}, 21(2):64--69, Feb 2003.

\bibitem{LarhlimiBockmayr2009}
Abdelhalim Larhlimi and Alexander Bockmayr.
\newblock A new constraint-based description of the steady-state flux cone of
  metabolic networks.
\newblock {\em Discrete Appl. Math.}, 157(10):2257--2266, 2009.

\bibitem{Mahadevan2002}
Radhakrishnan Mahadevan, Jeremy~S Edwards, and Francis~J Doyle.
\newblock Dynamic flux balance analysis of diauxic growth in escherichia coli.
\newblock {\em Biophys J}, 83(3):1331--1340, Sep 2002.

\bibitem{Molenaar2009}
Douwe Molenaar, Rogier van Berlo, Dick de~Ridder, and Bas Teusink.
\newblock Shifts in growth strategies reflect tradeoffs in cellular economics.
\newblock {\em Mol Syst Biol}, 5:323, 2009.

\bibitem{Moreno-Sanchez2007}
Rafael Moreno-Sanchez, Sara Rodriguez-Enriquez, Alvaro Marin-Hernandez, and
  Emma Saavedra.
\newblock {{E}nergy metabolism in tumor cells}.
\newblock {\em FEBS J.}, 274(6):1393--1418, Mar 2007.

\bibitem{Mueller2012}
Arne~C. M\"uller.
\newblock Thermodynamic constraints in metabolic networks.
\newblock Master's thesis, Fachbereich Mathematik und Informatik, Freie
  Universit\"at Berlin, 2012.

\bibitem{MuellerBockmayr2013}
Arne~C. M{\"u}ller and Alexander Bockmayr.
\newblock Fast thermodynamically constrained flux variability analysis.
\newblock {\em Bioinformatics}, 29:903--909, 2013.

\bibitem{MuellerRegensburger2012}
Stefan M\"uller and Georg Regensburger.
\newblock Generalized mass action systems: {C}omplex balancing equilibria and
  sign vectors of the stoichiometric and kinetic-order subspaces.
\newblock {\em SIAM J. Appl. Math.}, 72:1926--1947, 2012.

\bibitem{Noor2012}
Elad Noor, Nathan Lewis, and Ron Milo.
\newblock A proof for loop-law constraints in stoichiometric metabolic
  networks.
\newblock {\em BMC Systems Biology}, 6(1):140, 2012.

\bibitem{Oberhardt2009}
Matthew~A. Oberhardt, Bernhard~\O. Palsson, and Jason~A. Papin.
\newblock Applications of genome-scale metabolic reconstructions.
\newblock {\em Mol Syst Biol}, 5:320, 2009.

\bibitem{Oliveiraetal2001}
Joseph~S. Oliveira, Colin~G. Bailey, Janet~B. Jones-Oliveira, and David~A.
  Dixon.
\newblock An algebraic-combinatorial model for the identification and mapping
  of biochemical pathways.
\newblock {\em Bull. Math. Biol.}, 63(6):1163--1196, 2001.

\bibitem{Orth2010}
Jeffrey~D. Orth, Ines Thiele, and Bernhard~\O. Palsson.
\newblock What is flux balance analysis?
\newblock {\em Nat Biotechnol}, 28(3):245--248, Mar 2010.

\bibitem{Richter-GebertZiegler1997}
J{\"urgen}~Richter-Gebert and G{\"u}nter~M. Ziegler.
\newblock Oriented matroids.
\newblock In {\em Handbook of discrete and computational geometry}, pages
  111--132. CRC, Boca Raton, FL, 1997.

\bibitem{Rockafellar1969}
Ralph~T. Rockafellar.
\newblock The elementary vectors of a subspace of {$R^{N}$}.
\newblock In {\em Combinatorial {M}athematics and its {A}pplications ({P}roc.
  {C}onf., {U}niv. {N}orth {C}arolina, {C}hapel {H}ill, {N}.{C}., 1967)}, pages
  104--127. Univ. North Carolina Press, Chapel Hill, N.C., 1969.

\bibitem{Schuetz2012}
Robert Schuetz, Nicola Zamboni, Mattia Zampieri, Matthias Heinemann, and Uwe
  Sauer.
\newblock Multidimensional optimality of microbial metabolism.
\newblock {\em Science}, 336(6081):601--604, May 2012.

\bibitem{Schuster2011}
Stefan Schuster, Luis~F. de~Figueiredo, Anja Schroeter, and Christoph Kaleta.
\newblock Combining metabolic pathway analysis with evolutionary game theory:
  explaining the occurrence of low-yield pathways by an analytic optimization
  approach.
\newblock {\em Biosystems}, 105(2):147--153, Aug 2011.

\bibitem{SchusterHilgetag1994}
Stefan Schuster and Claus Hilgetag.
\newblock On elementary flux modes in biochemical reaction systems at steady
  state.
\newblock {\em J. Biol. Syst.}, 2:165--182, 1994.

\bibitem{SchusterHilgetagWoodsFell2002}
Stefan Schuster, Claus Hilgetag, John~H. Woods, and David~A. Fell.
\newblock Reaction routes in biochemical reaction systems: algebraic
  properties, validated calculation procedure and example from nucleotide
  metabolism.
\newblock {\em J. Math. Biol.}, 45(2):153--181, 2002.

\bibitem{Schuster2008}
Stefan Schuster, Thomas Pfeiffer, and David~A. Fell.
\newblock Is maximization of molar yield in metabolic networks favoured by
  evolution?
\newblock {\em J Theor Biol}, 252(3):497--504, Jun 2008.

\bibitem{Shlomi2011}
Tomer Shlomi, Tomer Benyamini, Eyal Gottlieb, Roded Sharan, and Eytan Ruppin.
\newblock Genome-scale metabolic modeling elucidates the role of proliferative
  adaptation in causing the {W}arburg effect.
\newblock {\em PLoS Comput Biol}, 7(3):e1002018, Mar 2011.

\bibitem{Steuer2008}
Ralf Steuer and Bjorn Junker.
\newblock Computational models of metabolism: Stability and regulation in
  metabolic networks.
\newblock {\em {in:} Advances in Chemical Physics {(ed S. A. Rice)}}, 142,
  2008.

\bibitem{Vazquez2008}
Alexei Vazquez, Qasim~K. Beg, Marcio~A. Demenezes, Jason Ernst, Ziv Bar-Joseph,
  Albert-L. Barab\'{a}si, L\'{a}szl\'{o}~G. Boros, and Zolt\'{a}n~N. Oltvai.
\newblock Impact of the solvent capacity constraint on {E}. coli metabolism.
\newblock {\em BMC Syst Biol}, 2:7, 2008.

\bibitem{Vazquez2010}
Alexei Vazquez, Jiangxia Liu, Yi~Zhou, and Zolt\'{a}n~N. Oltvai.
\newblock Catabolic efficiency of aerobic glycolysis: the {W}arburg effect
  revisited.
\newblock {\em BMC Syst Biol}, 4:58, 2010.

\bibitem{Vazquez2011}
Alexei Vazquez and Zolt\'{a}n~N. Oltvai.
\newblock Molecular crowding defines a common origin for the {W}arburg effect
  in proliferating cells and the lactate threshold in muscle physiology.
\newblock {\em PLoS One}, 6(4):e19538, 2011.

\bibitem{WagnerUrbanczik2005}
Clemens Wagner and Robert Urbanczik.
\newblock The geometry of the flux cone of a metabolic network.
\newblock {\em Biophys. J.}, 89:3837--3845, 2005.

\bibitem{Warburg1924}
Otto Warburg, Karl Posener, and Erwin Negelein.
\newblock On the metabolism of carcinoma cells.
\newblock {\em Biochem. Z.}, 152:309--344, 1924.

\bibitem{Wortel2013}
Meike~T. Wortel, Han Peters, Josephus Hulshof, Bas Teusink, and Frank~J.
  Bruggeman.
\newblock Optimal metabolism states with maximal speciﬁc rate carry flux
  through an elementary flux mode.
\newblock 2013.
\newblock Submitted.

\bibitem{Zhuang2011}
Kai Zhuang, Goutham~N. Vemuri, and Radhakrishnan Mahadevan.
\newblock Economics of membrane occupancy and respiro-fermentation.
\newblock {\em Mol Syst Biol}, 7:500, 2011.

\bibitem{Ziegler1995}
G{\"u}nter~M. Ziegler.
\newblock {\em Lectures on polytopes}, volume 152 of {\em Graduate Texts in
  Mathematics}.
\newblock Springer-Verlag, New York, 1995.

\end{thebibliography}


\end{document}